\newenvironment{proof}[1][Proof]{\begin{trivlist}
\item[\hskip \labelsep {\bfseries #1}]}{\end{trivlist}}
\newtheorem{lemma}{Lemma}
\newtheorem{theorem}{Theorem}
\begin{document}
\title{Designing Massive MIMO Detector via PS-ADMM approach}
\author{Quan~Zhang,
        Yongchao~Wang,~\IEEEmembership{Senior~Member,~IEEE}
}

\markboth{}
{}

\maketitle

\begin{abstract}
In this paper, we develop an efficient detector for massive multiple-input multiple-output (MIMO) communication systems via penalty-sharing alternating direction method of multipliers (PS-ADMM). Its main content are as follows:
first, we formulate the MIMO detection as a maximum-likelihood optimization problem with bound relaxation constraints.
Then, the higher modulation signals are decomposed into a sum of multiple binary variables through their inherent structures, by exploiting introduced binary variables as penalty functions, the detection optimization model is equivalent to a nonconvex sharing minimization problem.
Second, a customized ADMM algorithm is presented to solve the formulated nonconvex optimization problem. In the implementation, all variables can be solved analytically and parallelly.
Third, it is proved that the proposed PS-ADMM algorithm converges if proper parameters are chosen.
Simulation results demonstrate the effectiveness of the proposed approach.
\end{abstract}

\begin{IEEEkeywords}
 Massive MIMO, maximum-likelihood detection, penalty method, sharing-ADMM, nonconvex optimization
\end{IEEEkeywords}

\IEEEpeerreviewmaketitle

\section{Introduction}
\IEEEPARstart{M}{assive} multiple-input multiple-output (MIMO) technology, which invokes large number of antennas are equipped at the base station (BS) and serve a large number of user terminals in the same frequency band, is widely considered to be one of the  disruptive technologies of fifth-generation (5G) communication systems \cite{marzetta2010noncooperative}, \cite{Boccardi2013Five}. The foremost benefit of massive MIMO is the significant increase in the spatial degrees of freedom that can helps improve the throughput and energy efficiency by several orders of magnitude over conventional MIMO systems \cite{Ngo2013Energy}.
However, numerous practical challenges arise in implementing massive MIMO technology in order to achieve such improvements.
One such challenge is signal detection lies in uplink for a massive system, for which it is difficult to achieve an effective compromise among good detecting performance, low computational complexity and high processing  parallelism \cite{albreem2019massive}.

The optimal MIMO detection is maximum-likelihood (ML) detector \cite{Verdu1986Minimum}, suffers from an exponential increase in computational complexity with an increasing number of terminal antennas, which entails prohibitive complexity for the massive MIMO detection \cite{albreem2019massive}. Various nonlinear detection methods such as sphere decoding (SD) \cite{damen2003maximum}, the semidefinite relaxation (SDR) \cite{Zhi2010Semidefinite}, the PSK detector (PSKD) \cite{luo2003efficient}, K-best \cite{guo2006algorithm}, and the triangular approximate semidefinite relaxation (TASER) \cite{castaneda2016data},
have achieved near-optimal ML detection performance in small-scale MIMO systems \cite{yang2015fifty}, but they are still prohibitively complex for large-scale or high-order modulation MIMO systems.
Linear detection methods such as minimum mean square error (MMSE) \cite{6771364} and zero-forcing (ZF) \cite{1092893}, are one of the best choices with respect to (w.r.t) the tradeoff in performance and complexity, especially when the BS-to-user-antenna ratio is large \cite{rusek2012scaling}. In order to further reduce computational complexity and improve the detecting performance,
on the one hand, numerous detectors can be classified as approximate matrix inversion methods have been proposed to reduce the computational complexity of linear detectors \cite{albreem2019massive}, such as Neumann series (NS) \cite{wu2014large}, \cite{vcirkic2014complexity}, Gauss-Seidel (GS) \cite{dai2015low}, \cite{wu2016efficient}, Richardson (RI) \cite{gao2014low} and conjugate gradient (CG) \cite{yin2014conjugate} methods. However, the decrease in computational complexity of these algorithms comes at the expense of the loss of detecting performance, meanwhile these methods also deliver a poor BER performance when the BS-to-user antenna ratios is close to one.
On the other hand, various detection algorithms have been proposed to obtain better detecting performance than that of linear detection algorithms, which include a class of algorithms named BOX detection \cite{albreem2019massive}. In \cite{wu2016high}, a optimized coordinate descent with box-constrained equalization (OCD-BOX) shows better bit error rate (BER) performance with low hardware complexity, however, it cann't be implemented parallelly. In \cite{shahabuddin2017admm}, a detection algorithm based on alternating direction method of multipliers (ADMM) with infinity norm or box-constrained equalization named ADMIN has been proposed, which outperforms linear detectors by a large margin if the BS-to-user-antenna ratio is small.

In recent years, ADMM technique was widely used to solve convex and nonconvex problems due to its simplicity, operator splitting capabilities and convergence guarantees under mild conditions \cite{boyd2011distributed}. In \cite{7526551}, ADMM was introduced into the MIMO detection area as a example to illustrate that the ADMM method is very effective in solving mixed-integer quadratic programming.
The ADMM detection algorithm was applied in various scenarios for MIMO systems \cite{souto2016mimo}, \cite{souto2018efficient} and improved for massive MIMO systems \cite{lopes2018iterative,elgabli2019a,elgabli2019pro,feng2019a,8755566}. Although ADMM-based methods above-mentioned can achieve BER performance significantly outperform than that of conventional detectors, there are two major deficiencies severely limit the high accuracy solution of detection optimization problem be solved. First, the constraints set of optimization problem is over-relaxed rather than exact replaced. Second, these works haven't a specifical structure of the objective in optimization problem for high-order modulation systems  .

In this paper, we focus on designing a new ADMM-based detector for massive MIMO systems. By combining ideas of penalized bound relaxation for the ML detection formulation in \cite{ma2002quasi}, high-order QAM signals are converted into binary forms in \cite{mao2007semidefinite} and sharing-ADMM technique in \cite{boyd2011distributed}, \cite{hong2016convergence}, respectively, we obtain a new detector for massive MIMO systems, called PS-ADMM, which has favorable BER performance meanwhile providing a cheap complexity. The main technical contributions of this paper are summarized as follows:

\begin{itemize}
\item \emph{Penalty sharing formulation}:
the MIMO detection problem is formulated as a ML optimization problem with bound relaxation constraints. Then, high-order modulation signals are decomposed into a sum of multiple binary variables, by introducing these binary variables as penalty functions, the ML problem is equivalent to a nonconvex sharing minimization problem.
\item \emph{Efficient implementation}: the main advantage of the formulated penalty sharing model is it can be solved by sharing-ADMM algorithms. In the favourable execution architecture, all the variables in subproblems can be solved analytically, accurately and updated parallelly in each iteration step. As a result, the PS-ADMM algorithm achieves excellent BER performance while cheap computational complexity providing, especially when the BS-to-user-antenna ratio is close to one.
\item \emph{Theoretically-guaranteed performance}:
we prove that the proposed PS-ADMM algorithm is convergent and can approach arbitrarily close to a stationary point of the nonconvex optimization problem if proper parameters are chosen.
This also means that the updates will not change significantly from their initial values, the iterative method can be start at zero or some other default initialization.
\end{itemize}

The rest of this paper is organized as follows.
In Section \ref{sec:problem-formulation}, we formulate the massive MIMO detection problem to a nonconvex sharing minimization problem.
In Section \ref{sec:PS-ADMM Solving Algorithm}, an efficient sharing-ADMM algorithm is customized to solve the nonconvex minimization problem.
Section \ref{sec:Analysis-psadmm} presents the detailed performance analysis, including convergence and complexity of the proposed PS-ADMM algorithm.
Simulation results, which show the effectiveness of our proposed PS-ADMM algorithm, are presented in Section \ref{sec:Simulation results}
and the conclusions are given in Section \ref{sec:Conclusion}.

\emph{Notations}: In this paper, bold lowercase, uppercase and italics letters denote column vectors, matrices and scalars respectively;
 $\mathds{C}$ denotes the the complex field;
 $(\cdot)^H$ symbolizes the conjugate transpose operation;
 $x_{iR}$, $x_{iR}$ denote the real and imaginary parts of the $ith$ entry of a vector $\mathbf{x}$ respectively;
 $\|\cdot\|_2$ represents the 2-norm of vector $\mathbf{x}$;
 $\underset{[a,b]}\Pi(\cdot)$ denotes the Euclidean projection operator onto the interval $[a,b]$;
 $\nabla(\cdot)$ represents the gradient of a function;
 $\rm{Re}(\cdot)$ takes the real part of the complex variable;
 $\lambda_{\rm min}(\cdot)$ and $\lambda_{\rm max}(\cdot)$ denote the minimum and maximum eigenvalue of a matrix respectively;
 $\langle \cdot,\cdot\rangle$ denotes the dot product operator and ${\mathbf{I}}$ denotes an identity matrix.

\section{System Model And Problem Formulation}
\label{sec:problem-formulation}

The considered signal detection problem lies in uplink of the massive multi-user (MU) MIMO systems, where BS equipped with $B$ antennas serves $U$ single-antenna users. Here, we assume $ B \ge U$. Typically, the received signal vector at BS can be characterized by the following model
\begin{equation}\label{transmission model}
{\mathbf{r}} = {\mathbf{Hx}} + {\mathbf{n}},
\end{equation}
in \eqref{transmission model}, $\mathbf{x}\in \mathcal{X}^{U}$ is the transmitted signal vector from the U users and $\mathcal{X}$ refers to the signal constellation set, $\mathbf{r}\in \mathbb{C}^{B}$ is the BS received signal vector, $\mathbf{H}\in \mathbb{C}^{B\times U}$ denotes the MIMO channel matrix, and $\mathbf{n}\in \mathbb{C}^{B}$ denotes additive white Gaussian noise. The entries of $\mathbf{H}$ and $\mathbf{n}$ are assumed to be independent and identically distributed (i.i.d.) complex Gaussian variables with zero mean.

The ML detector for MIMO QAM signals, i.e., achieving minimum error probability of detecting $\mathbf{x}$ from the received signal $\mathbf{r}$, can be formulated as the following discrete least square problem \cite{Zhi2010Semidefinite}
\begin{equation}\label{eq:MLdetection}
\begin{split}
&\min_{\mathbf{x}\in \mathcal{X}^U} \Vert\mathbf{r}-\mathbf{H}\mathbf{x} \Vert_2^{2}, \\
\end{split}
\end{equation}
where ${\mathcal X}= \{x= x_{R} + j x_{I} \vert x_{R}, x_{I} \in \{\pm 1, \pm 3, \cdots , \pm (2^{Q} -1)\}\}$ and $Q$ is some positive integer. The model \eqref{eq:MLdetection} is a typical combination optimization problem \cite{verdu1989computational} since the constraint $\mathbf{x}\in \mathcal{X}^{U}$ is discrete. It means that obtaining \eqref{eq:MLdetection}'s global optimal solution is prohibitive in practice since the corresponding computational complexity grows exponentially with the users' number $U$, BS's antenna number $B$, and the set $\mathcal{X}$'s size \cite{chen2017a}. In the following, through exploiting insight structures of the model \eqref{eq:MLdetection}, we proposed a {\it relaxation-tighten} technique and transform it to the well-known sharing problem.

Let $\mathbf{x}_q = \mathbf{x}_{qR}+j\mathbf{x}_{qI}$, where $\mathbf{x}_q\in \mathcal{X}^U_q = \{\mathbf{x}_{qR} + j \mathbf{x}_{qI}| \mathbf{x}_{qR}, \mathbf{x}_{qI} \in \{1,-1\}^U\}$ and $q = 1,\cdots,Q$. Then, any transmitted signal vector $\mathbf{x}\in\mathcal{X}^{U}$ can be always expressed by
\begin{equation}\label{eq:XsumXq}
\mathbf{x}=\sum_{q=1}^{Q}2^{q-1} \mathbf{x}_q,\ q = 1,\cdots,Q.
\end{equation}
Plugging \eqref{eq:XsumXq} into the model \eqref{eq:MLdetection}, it can be equivalent to
\begin{subequations}\label{eq:S_MLdetection}
\begin{align}
&\hspace{0.4cm}\min_{\mathbf{x}_q} \hspace{0.2cm} \frac{1}{2}\Vert\mathbf{r}-\mathbf{H}(\sum_{q=1}^{Q}2^{q-1} \mathbf{x}_q) \Vert_2 ^{2},  \label{eq:S_MLdetection_a} \\
&\hspace{0.5cm}{\rm {s.t.}}   \hspace{0.3cm}\mathbf{x}_q \in \mathcal{X}_q^{U},\ q = 1,\cdots,Q. \label{eq:S_MLdetection_b}
\end{align}
\end{subequations}
Relax the binary integer constraints \eqref{eq:S_MLdetection_b} to the box constraints $\mathbf{x}_q\in\tilde{\mathcal{X}}^U_q= \{\mathbf{x}_q= \mathbf{x}_{qR} + j \mathbf{x}_{qI} \vert\mathbf{x}_{qR}, \mathbf{x}_{qI} \in [-1\ 1]^U\}$ and then tighten the relaxation by adding the quadratic penalty function into the objective \eqref{eq:S_MLdetection_a}. We can transform the model \eqref{eq:S_MLdetection} to
\begin{equation}\label{eq:PS_ML}
\begin{split}
&\hspace{0.3cm} \min_{\mathbf{x}_q} \hspace{0.2cm} \frac{1}{2}\Vert\mathbf{r}-\mathbf{H}(\sum_{q=1}^{Q}2^{q-1} \mathbf{x}_q) \Vert_2 ^{2}-\sum_{q=1}^{Q}\frac{\alpha_q}{2}\Vert\mathbf{x}_q \Vert_2 ^{2}\\
&\hspace{0.4cm} {\rm {s.t.}} \hspace{0.4cm}\mathbf{x}_q\in \tilde{\mathcal{X}}_q^{U},\;q = 1,\cdots,Q,
\end{split}
\end{equation}
where penalty parameters $\alpha_q \ge 0$. It is easy to see that the non-convex quadratic penalty function can make the integer solutions more favorable. Moreover, by introducing auxiliary variable $\mathbf{x}_0\in \mathbb{C}^{U}$, the model \eqref{eq:PS_ML} can be further equivalent to
\begin{equation}\label{eq:PS_ADMM}
\begin{split}
&\min_{\mathbf{x}_0, \mathbf{x}_q} \hspace{0.2cm} \frac{1}{2}\Vert\mathbf{r}-\mathbf{H}\mathbf{x}_0 \Vert_2 ^{2}-\sum_{q=1}^{Q}\frac{\alpha_q}{2}\Vert\mathbf{x}_q \Vert_2 ^{2}\\
&\ \ {\rm {s.t.}} \hspace{0.3cm} \mathbf{x}_0=\sum_{q=1}^{Q}2^{q-1} \mathbf{x}_q,\ \mathbf{x}_q\in  \tilde{\mathcal{X}}_q^{U}, q = 1,\cdots,Q,
\end{split}
\end{equation}
which is called by sharing problem. In the following,  an efficient ADMM solving algorithm, named PS-ADMM, to solve \eqref{eq:PS_ADMM}. We also provide detailed analysis of the proposed PS-ADMM algorithm on convergence, computational complexity, and solution's quality.

\section{PS-ADMM Solving Algorithm}
\label{sec:PS-ADMM Solving Algorithm}

ADMM is a popular and powerful technique for solving large-scale  optimization problems \cite{boyd2011distributed}.
In this section, we show a new MIMO detection algorithm, through exploiting ADMM technique, to solve \eqref{eq:PS_ADMM}.

The augmented Lagrangian function of the problem \eqref{eq:PS_ADMM} can be expressed as
\begin{align}
\begin{split}\label{eq:lagrangian_PSADMM}
&L_{\rho}(\{\mathbf{x}_q,q\in \mathcal Q\},\mathbf{x}_0,\mathbf{y}) = \frac{1}{2}\Vert\mathbf{r}\!-\!\mathbf{H}\mathbf{x}_0 \Vert_2 ^{2}\!-\!\sum_{q=1}^{Q}\frac{\alpha_q}{2}\!\Vert\mathbf{x}_q\!\Vert_2 ^{2}\\
&\ \ +{\rm Re}\big\langle \mathbf{x}_0-\sum_{q=1}^{Q}2^{q-1} \mathbf{x}_q, \mathbf{y}\big\rangle+\frac{\rho}{2}\big\|\mathbf{x}_0-\sum_{q=1}^{Q}2^{q-1}\mathbf{x}_q \big\|_2^2,
\end{split}
\end{align}
where $\mathbf{y}\in \mathbb{C}^{U}$ and $\rho>0$ are the Lagrangian multiplier and penalty parameter respectively.
Based on the above augmented Lagrangian, the classical algorithm framework to solve \eqref{eq:PS_ADMM} can be described as
\begin{subequations}\label{PSADMM_update_ori}
\begin{align}
&\mathbf{x}_q^{k+1}\!\!=\!\! \mathop{\arg\min}_{\mathbf{x}_q\in\tilde{\mathcal{X}}_q^{U}}\! L(\mathbf{x}_q,\!\mathbf{x}_1^{k+1}\!\!\!\!,\!\cdots\!,\mathbf{x}_{q-1}^{k+1}\!, \mathbf{x}_{q+1}^k\!,\!\cdots\!,\!\mathbf{x}_Q^k,\!\mathbf{x}_0^{k},\! \mathbf{y}^{k}),\nonumber\\
& \hspace{5.8cm} q=1,\cdots,Q,  \label{eq:x_q_update}\\
&\mathbf{x}_0^{k+1}=\arg\min_{\mathbf{x}_0} \;\;L_{\rho}( \{\mathbf{x}_q^{k+1}\}_{q=1}^{Q}, \mathbf{x}_0, \mathbf{y}^{k}), \label{eq:x0_update}\\
&\mathbf{y}^{k+1}=\mathbf{y}^k+\rho\Big(\mathbf{x}_0^{k+1}-\sum_{q=1}^{Q}2^{q-1} \mathbf{x}_q^{k+1}\Big),\label{eq:y_update}
\end{align}
\end{subequations}
where $k$ denotes the iteration number.

The main challenge of implementing \eqref{PSADMM_update_ori} lies in how to solve suboptimization problems \eqref{eq:x_q_update} and \eqref{eq:x0_update} efficiently.
For \eqref{eq:x_q_update}, it can be observed that $L_{\rho}(\{\mathbf{x}_q\}_{q=1}^{Q},
\mathbf{x}_0^k, \mathbf{y}^k)$ is a strongly convex quadratic function with respect to some specific $\mathbf{x}_q$ when $4^{q-1}\rho>\alpha_q$. It means that the solution of the suboptimization problems \eqref{eq:x_q_update} can be obtained through the following procedures: set the gradient of the corresponding augmented Lagrangian function with respect to $\mathbf{x}_q$ to be zero, i.e.,
\begin{equation}
\nabla_{\mathbf{x}_q}L_{\rho}(\mathbf{x}_q,\!\mathbf{x}_1^{k+1}\!\!,\!\cdots\!,\mathbf{x}_{q-1}^{k+1}\!, \mathbf{x}_{q+1}^k,\cdots,\mathbf{x}_Q^k,\mathbf{x}_0^{k},\!\mathbf{y}^{k}) \!= \!0,
\end{equation}
which leads to the following linear equation
\begin{equation}\label{eq:x_q_update_zero}
\begin{split}
& \nabla_{\mathbf{x}_q} \Big(-\frac{\alpha_q}{2}\Vert\mathbf{x}_q \Vert_2 ^{2}-{\rm Re}\langle 2^{q-1}\mathbf{x}_q,\;\mathbf{y}^k\rangle  \\
&\!+\!\frac{\rho}{2}\|\mathbf{x}_0^{k}\!-\!\!\sum_{i<q} 2^{i-1}\mathbf{x}_i^{k+1}\!-\!\!\sum_{i>q}2^{i-1} \mathbf{x}_i^{k}\!-\!2^{q-1} \mathbf{x}_q \|_2^2 \Big)\!=\!0.
\end{split}
\end{equation}
Noticing the variables in $\mathbf{x}_q$ are separable, we can obtain global optimal solution of the problem \eqref{eq:x_q_update} in the following
\begin{equation}\label{eq: solution_Xq}
\begin{split}
&\mathbf{x}_q^{k+1} = \underset{[-1,1]}\Pi \bigg(\frac{2^{q-1}}{4^{q-1}\rho-\alpha_q)}\Big( \rho\mathbf{x}_0^{k}-\rho\sum_{i<q} 2^{i-1}\mathbf{x}_i^{k+1} \\
&\hspace{0.9cm} -\rho\sum_{i>q}2^{i-1} \mathbf{x}_i^{k}+ \mathbf{y}^k \Big) \bigg),\ \ \ q = 1,\cdots,Q,
\end{split}
\end{equation}
where $\underset{[-1,1]}\Pi(\cdot)$ performs the following operation: project every entry's real part or imaginary part of the input vector onto [-1,1].

Moreover, $L_{\rho}(\{\mathbf{x}_q^{k+1}\}_{q=1}^{Q}, \mathbf{x}_0, \mathbf{y}^k)$ is a strongly convex quadratic function since $\rho>0$ and matrix $\mathbf{H}^{H} \mathbf{H} $ is positive definite. Then, the optimal solution of the suboptimization problem \eqref{eq:x0_update} can be obtained through setting $\nabla_{\mathbf{x}_0} L_{\rho}(\{\mathbf{x}_q^{k+1}\}_{q=1}^{Q}, \mathbf{x}_0, \mathbf{y}^{k}) =\!0$ to be zero and solving the corresponding linear equation, which results in
\begin{equation}\label{eq: solution X0}
\mathbf{x}_0^{k+1} = {(\mathbf{H}^{H} \mathbf{H}+\rho\mathbf{I})}^{-1}\bigg(\! \mathbf{H}^{H} \mathbf{r} + \rho\sum_{q=1}^{Q}2^{q-1} \mathbf{x}^{k+1}_q - \mathbf{y}^{k}\!\bigg).
\end{equation}

 To be clear, we summarize the proposed PS-ADMM algorithm for solving model \eqref{eq:PS_ADMM} in \emph{Algorithm \ref{PS-ADMM algorithm}}.

\begin{algorithm}[!t]
\caption{The proposed PS-ADMM algorithm}
\label{PS-ADMM algorithm}
\begin{algorithmic}[1]
\REQUIRE $\mathbf{H}$, $\mathbf{r}$, ${Q}$, $\rho$, $\{\alpha_q\}_{q=1}^{Q}$\\
\ENSURE $\mathbf{x}_0^k$\\
\STATE Initialize $\{\mathbf{x}_q^{1}\}_{q=1}^{Q}, \mathbf{x}_0^{1}, \mathbf{y}^1$ as the all-zeros vectors\footnotemark.
\STATE  \textbf{For $k = 1,2,\cdots$}
\STATE \hspace{0.2cm} Step 1: Update $\{\mathbf{x}_q^{k+1}\}_{q=1}^{Q}$, sequentially via \eqref{eq: solution_Xq}.
\STATE \hspace{0.2cm} Step 2: Update $\mathbf{x}_0^{k+1}$ via \eqref{eq: solution X0}.
\STATE \hspace{0.2cm} Step 3: Update $\mathbf{y}^{k+1}$ via \eqref{eq:y_update}.
\STATE \textbf{Until} some preset condtion is satisfied.
\end{algorithmic}
\end{algorithm}
\footnotetext{Unlike some existing algorithms that are derived from the solution of the simple linear algorithm such as MMSE or ZF, there is no computations required for PS-ADMM to perform initialization, and the initial values $\{\mathbf{x}_q^{1}\}_{q=1}^{Q}$, $\mathbf{x}_0^{1}$, $\mathbf{y}^{1}$ can be set to zeros, ones, minus ones and random values.}
\vspace{-0.1cm}

\section{Performance Analysis}\label{sec:Analysis-psadmm}
In this section we make a detailed analysis of \emph{Algorithm \ref{PS-ADMM algorithm}} from the viewpoints of convergence property, convergence rate and computational complexity.
\subsection{Convergence property}
We have the following theorem to show convergence properties of the proposed PS-ADMM algorithm.
\begin{theorem}\label{thm:convergence}
Assume parameters $\rho$ and $\{\alpha_q\}_{q=1}^{Q}$ satisfy $4^{q-1}\rho>\alpha_q$ and $\rho > \sqrt{2} \lambda_{\rm max}(\mathbf{H}^H\mathbf{H})$, where q = 1,···,Q. The sequence $\{\{\mathbf{x}^{k}_q\}_{q=1}^{Q}, \mathbf{x}_0^{k}, \mathbf{y}^{k}\}$ generated by \emph{Algorithm \ref{PS-ADMM algorithm}} is convergent, i.e.,
\begin{equation}\label{convergence variables}
\begin{split}
&\lim\limits_{k\rightarrow+\infty}\mathbf{x}^{k}_q=\mathbf{x}^*_q, \ \ \lim\limits_{k\rightarrow+\infty}\mathbf{x}_0^{k}=\mathbf{x}_0^*, \lim\limits_{k\rightarrow+\infty}\mathbf{y}^{k}=\mathbf{y}^*, \\
& \hspace{0.3cm} \forall~\mathbf{x}_q\in \tilde{\mathcal{X}}_q^{U}, \; q = 1,\cdots,Q.
\end{split}
\end{equation}
Moreover, $\{\mathbf{x}^*_q\}_{q=1}^{Q}$ is a stationary point of original problem \eqref{eq:PS_ML}, i.e., it satisfies the following inequality

\begin{align}\label{stationary point}
& {\rm Re}\Big \langle\nabla_{\mathbf{x}_q} \Big(\ell\big(\sum_{q=1}^{Q}2^{q-1} \mathbf{x}_q^*\big)-\sum_{q=1}^{Q}\frac{\alpha_q}{2}\Vert\mathbf{x}_q^* \Vert_2 ^{2}\Big),\mathbf{x}_q-\mathbf{x}^*_q\Big\rangle\nonumber\\
& \quad\quad\ge 0,\hspace{1.8cm}\forall~\mathbf{x}_q\in \tilde{\mathcal{X}}_q^{U},\; q = 1,\cdots,Q.
\end{align}
\end{theorem}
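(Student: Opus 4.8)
The plan is to use the augmented Lagrangian $L_\rho$ in \eqref{eq:lagrangian_PSADMM} itself as a Lyapunov (potential) function and run the standard two-pronged argument for nonconvex ADMM: establish a \emph{sufficient-decrease} property of the form $L_\rho^{k+1}\le L_\rho^k-\sum_q c_q\|\mathbf{x}_q^{k+1}-\mathbf{x}_q^k\|_2^2-c_0\|\mathbf{x}_0^{k+1}-\mathbf{x}_0^k\|_2^2$ with positive constants, and separately show $L_\rho$ is bounded below along the iterates. Together these force the successive differences to vanish, after which passing to the limit in the per-block optimality conditions yields the stationarity inequality \eqref{stationary point}. The whole argument hinges on one structural identity: because the $\mathbf{x}_0$-subproblem \eqref{eq:x0_update} is smooth and unconstrained, its optimality condition combined with the dual update \eqref{eq:y_update} gives $\mathbf{y}^{k+1}=\mathbf{H}^H\mathbf{r}-\mathbf{H}^H\mathbf{H}\mathbf{x}_0^{k+1}$, so the dual variable is a fixed affine function of $\mathbf{x}_0$; this is what lets us control the dual ascent by the primal movement.

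First I would differentiate $L_\rho$ in $\mathbf{x}_0$, set it to zero at $\mathbf{x}_0^{k+1}$, and substitute \eqref{eq:y_update} to obtain the identity above, which immediately gives $\|\mathbf{y}^{k+1}-\mathbf{y}^k\|_2\le\lambda_{\rm max}(\mathbf{H}^H\mathbf{H})\,\|\mathbf{x}_0^{k+1}-\mathbf{x}_0^k\|_2$. Next I would decompose the one-step change of $L_\rho$ across the three updates. Each $\mathbf{x}_q$-block minimizes a function with Hessian $(4^{q-1}\rho-\alpha_q)\mathbf{I}\succ0$ under $4^{q-1}\rho>\alpha_q$, so that block contributes a decrease of at least $\tfrac{4^{q-1}\rho-\alpha_q}{2}\|\mathbf{x}_q^{k+1}-\mathbf{x}_q^k\|_2^2$; the $\mathbf{x}_0$-block is strongly convex with modulus $\ge\rho$, giving a decrease of at least $\tfrac{\rho}{2}\|\mathbf{x}_0^{k+1}-\mathbf{x}_0^k\|_2^2$; and the dual update \emph{increases} $L_\rho$ by exactly $\tfrac{1}{\rho}\|\mathbf{y}^{k+1}-\mathbf{y}^k\|_2^2$. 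Inserting the dual bound, the net coefficient of $\|\mathbf{x}_0^{k+1}-\mathbf{x}_0^k\|_2^2$ becomes $-\big(\tfrac{\rho}{2}-\tfrac{\lambda_{\rm max}^2}{\rho}\big)$ (with $\lambda_{\rm max}$ short for $\lambda_{\rm max}(\mathbf{H}^H\mathbf{H})$), which is strictly negative precisely when $\rho>\sqrt{2}\,\lambda_{\rm max}(\mathbf{H}^H\mathbf{H})$ --- exactly the hypothesis. This is the step I expect to be the main obstacle, since the descent survives only if the dual-ascent term is dominated, and getting the constants right (while tracking the sequential Gauss--Seidel nature of the $\mathbf{x}_q$ updates) is where both assumptions are consumed.

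Then I would show $L_\rho$ is bounded below. Writing $\ell(\mathbf{x}_0)=\tfrac12\|\mathbf{r}-\mathbf{H}\mathbf{x}_0\|_2^2$ and using the quadratic identity together with $\mathbf{y}^{k+1}=-\nabla\ell(\mathbf{x}_0^{k+1})$, the Lagrangian rewrites as $L_\rho^{k+1}=\ell(\mathbf{z}^{k+1})-\sum_q\tfrac{\alpha_q}{2}\|\mathbf{x}_q^{k+1}\|_2^2+\tfrac12(\mathbf{x}_0^{k+1}-\mathbf{z}^{k+1})^H(\rho\mathbf{I}-\mathbf{H}^H\mathbf{H})(\mathbf{x}_0^{k+1}-\mathbf{z}^{k+1})$, where $\mathbf{z}^{k+1}=\sum_q 2^{q-1}\mathbf{x}_q^{k+1}$. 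Since $\rho>\sqrt2\lambda_{\rm max}>\lambda_{\rm max}$ the last term is nonnegative, $\ell\ge0$, and feasibility of $\mathbf{x}_q$ in the box $\tilde{\mathcal{X}}_q^{U}$ bounds $\sum_q\tfrac{\alpha_q}{2}\|\mathbf{x}_q^{k+1}\|_2^2$ by a constant; hence $L_\rho$ is bounded below. Combined with monotone decrease, $L_\rho^k$ converges and $\|\mathbf{x}_q^{k+1}-\mathbf{x}_q^k\|_2\to0$, $\|\mathbf{x}_0^{k+1}-\mathbf{x}_0^k\|_2\to0$, and then $\|\mathbf{y}^{k+1}-\mathbf{y}^k\|_2\to0$. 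The same rewriting bounds $\|\mathbf{x}_0^{k+1}-\mathbf{z}^{k+1}\|_2$, so all three sequences are bounded and admit convergent subsequences; full-sequence convergence as stated can be secured by noting that $L_\rho$ is a polynomial, hence a Kurdyka--\L{}ojasiewicz function.

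Finally, I would pass to the limit. From $\|\mathbf{y}^{k+1}-\mathbf{y}^k\|_2=\rho\|\mathbf{x}_0^{k+1}-\mathbf{z}^{k+1}\|_2\to0$ the consensus constraint is restored, $\mathbf{x}_0^*=\sum_q2^{q-1}\mathbf{x}_q^*$, and $\mathbf{y}^*=-\nabla\ell(\mathbf{x}_0^*)$. Each $\mathbf{x}_q$-update is a projected minimization over the convex box, so its first-order optimality reads ${\rm Re}\langle\nabla_{\mathbf{x}_q}L_\rho,\mathbf{x}_q-\mathbf{x}_q^{k+1}\rangle\ge0$ for all feasible $\mathbf{x}_q$. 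Taking limits (using continuity of the gradient and the vanishing increments) and evaluating at the consensus point annihilates the penalty gradient, leaving $\nabla_{\mathbf{x}_q}L_\rho\to-\alpha_q\mathbf{x}_q^*-2^{q-1}\mathbf{y}^*=2^{q-1}\nabla\ell(\mathbf{x}_0^*)-\alpha_q\mathbf{x}_q^*$, which is exactly $\nabla_{\mathbf{x}_q}\big(\ell(\sum_q2^{q-1}\mathbf{x}_q^*)-\sum_q\tfrac{\alpha_q}{2}\|\mathbf{x}_q^*\|_2^2\big)$. Hence the limiting variational inequality is precisely \eqref{stationary point}, completing the proof.
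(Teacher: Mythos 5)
Your proposal is correct and follows essentially the same route as the paper: the key identity $\mathbf{y}^{k+1}=-\nabla_{\mathbf{x}_0}\ell(\mathbf{x}_0^{k+1})$ to bound dual movement by primal movement (the paper's Lemma \ref{lemma:z1}), the same sufficient-decrease-plus-lower-boundedness Lyapunov argument on $L_\rho$ with exactly the same constants (Lemmas \ref{lemma:L_difference} and \ref{lemma:L_bounded1}), and the same passage to the limit in the per-block optimality conditions to get \eqref{stationary point}. The only differences are cosmetic or mildly stronger: you use the exact quadratic expansion of $\ell$ where the paper invokes the Descent Lemma, and your Kurdyka--\L{}ojasiewicz remark to upgrade vanishing successive differences to full-sequence convergence actually addresses a step the paper asserts without justification.
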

where $\ell\left(\sum_{q=1}^{Q}2^{q-1} \mathbf{x}_q^*\right)=\frac{1}{2}\Vert\mathbf{r}-\mathbf{H}(\sum_{q=1}^{Q}2^{q-1} \mathbf{x}_q^*) \Vert_2 ^{2}$, we use $\ell\left(\mathbf{x}\right)$ to denote $\frac{1}{2}\Vert\mathbf{r}-\mathbf{H}\mathbf{x} \Vert_2 ^{2}$ in the rest of this paper.

{\it Remarks:}
   Theorem \ref{thm:convergence} indicates that the proposed PS-ADMM algorithm is theoretically-guaranteed convergent to some stationary point of model \eqref{eq:PS_ML} under the conditions $4^{q-1}\rho>\alpha_q, q = 1,\cdots,Q$ and $\rho>\sqrt{2}\lambda_{\rm max}(\mathbf{H}^H\mathbf{H})$. Here, we should note that these conditions are easily satisfied since the value of penalty parameters $\rho$ and $\{\alpha_q\}_{q=1}^{Q}$ can be set accordingly when the channel matrix $\mathbf{H}$ is known. The key idea of proving Theorem \ref{thm:convergence} is to find out that potential function $L_{\rho}(\{\mathbf{x}_q\}_{q=1}^{Q}, \mathbf{x}_0, \mathbf{y})$ {\it decreases sufficiently} in every ADMM iteration and is lower-bounded. To reach this goal, we first prove several related lemmas in Appendix \ref{lemma1-3}. Then, we give the detailed proof of Theorem \ref{thm:convergence} in Appendix \ref{PS-ADMM Proof}.

\subsection{Convergence rate}
We use the residual error which is defined as $\sum_{q=1}^{Q}\|\mathbf{x}_q^{k+1}-\mathbf{x}_{q}^{k}\|_2^2 + \|\mathbf{x}_0^{k+1}-\mathbf{x}_0^{k}\|_2^2$ to measure the convergence progress of the PS-ADMM algorithm since it converges to zero as $k\rightarrow+\infty$. Then, we have Theorem \ref{iteration complexity} about its convergence progress. The detailed proof is shown in Appendix \ref{Iteration complexity Proof}.

\begin{theorem}\label{iteration complexity}
  Let $t$ be the minimum iteration index such that $\sum_{q=1}^{Q}\|\mathbf{x}_q^{k+1}-\mathbf{x}_{q}^{k}\|_2^2 + \|\mathbf{x}_0^{k+1}-\mathbf{x}_0^{k}\|_2^2\leq\epsilon$, where $\epsilon$ is the desired precise parameter for the solution. Then, we have the following iteration complexity result
 \[
   \begin{split}
     t \!\leq \!\frac{1}{C\epsilon}\bigg(\!L_{\rho}(\{\mathbf{x}^{1}_q\}_{q=1}^{Q}, \mathbf{x}_0^{1}, \mathbf{y}^{1}) \!-\! \Big(\ell\left(\mathbf{x}^*\right) \!-\! \sum_{q=1}^{Q}\frac{\alpha_q}{2} \Vert\mathbf{x}_q^* \Vert_2 ^{2}\Big)\!\bigg),
   \end{split}
 \]
where the constant \[ C\!=\!\min\!\left\{\!\{ \frac{\gamma_q(\rho)}{2}\}_{q=1}^{Q}, \Big(\frac{\gamma(\rho)}{2}\!\!-\!\!\frac{\lambda_{\rm max}^2(\mathbf{H}^{H} \mathbf{H})}{\rho}\Big)\! \right\}.\]
\end{theorem}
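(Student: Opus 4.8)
The plan is to convert the per-iteration sufficient-decrease estimate already established for Theorem \ref{thm:convergence} into an iteration bound by a telescoping-plus-lower-bound argument. Writing $L_\rho^{k}$ as shorthand for $L_\rho(\{\mathbf{x}_q^{k}\}_{q=1}^{Q},\mathbf{x}_0^{k},\mathbf{y}^{k})$, I would first invoke from the lemmas in Appendix \ref{lemma1-3} the key inequality
\[
L_\rho^{k} - L_\rho^{k+1} \ge C\Big( \sum_{q=1}^{Q}\|\mathbf{x}_q^{k+1}-\mathbf{x}_q^{k}\|_2^2 + \|\mathbf{x}_0^{k+1}-\mathbf{x}_0^{k}\|_2^2 \Big),
\]
with the same $C$ named in the statement. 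Here the terms $\gamma_q(\rho)/2$ arise from the strong convexity of each $\mathbf{x}_q$-subproblem (modulus $4^{q-1}\rho-\alpha_q>0$), the term $\gamma(\rho)/2$ from the $\mathbf{x}_0$-subproblem (modulus $\lambda_{\rm min}(\mathbf{H}^H\mathbf{H})+\rho$), and the subtracted $\lambda_{\rm max}^2(\mathbf{H}^H\mathbf{H})/\rho$ from controlling the dual increment. Indeed, the optimality condition of \eqref{eq:x0_update} combined with \eqref{eq:y_update} gives $\mathbf{y}^{k+1}-\mathbf{y}^{k}=-\mathbf{H}^H\mathbf{H}(\mathbf{x}_0^{k+1}-\mathbf{x}_0^{k})$, whence $\|\mathbf{y}^{k+1}-\mathbf{y}^{k}\|_2^2\le\lambda_{\rm max}^2(\mathbf{H}^H\mathbf{H})\|\mathbf{x}_0^{k+1}-\mathbf{x}_0^{k}\|_2^2$; the spectral condition $\rho>\sqrt{2}\,\lambda_{\rm max}(\mathbf{H}^H\mathbf{H})$ is exactly what forces $C>0$.

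Second, I would pin down the lower limit of the potential. By Theorem \ref{thm:convergence} the iterates converge and the constraint residual $\mathbf{x}_0^{k}-\sum_{q}2^{q-1}\mathbf{x}_q^{k}=(\mathbf{y}^{k+1}-\mathbf{y}^{k})/\rho$ tends to zero, so the inner-product and quadratic penalty terms of $L_\rho$ vanish in the limit. Consequently $L_\rho^{k}$ decreases monotonically to $L^{*}:=\ell(\mathbf{x}^*)-\sum_{q=1}^{Q}\frac{\alpha_q}{2}\|\mathbf{x}_q^*\|_2^2$, and in particular $L_\rho^{k}\ge L^{*}$ for every $k$, where $\mathbf{x}^*=\sum_{q}2^{q-1}\mathbf{x}_q^*$.

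Third, I would telescope the decrease inequality from $k=1$ to $k=t$ and apply this lower bound:
\[
L_\rho^{1} - L^{*} \ge L_\rho^{1} - L_\rho^{t+1} = \sum_{k=1}^{t}\big(L_\rho^{k}-L_\rho^{k+1}\big) \ge C\sum_{k=1}^{t}\Big( \sum_{q=1}^{Q}\|\mathbf{x}_q^{k+1}-\mathbf{x}_q^{k}\|_2^2 + \|\mathbf{x}_0^{k+1}-\mathbf{x}_0^{k}\|_2^2 \Big).
\]
By the minimality of $t$, every residual error preceding termination exceeds $\epsilon$, so the right-hand sum is at least $t\epsilon$; dividing by $C\epsilon$ then yields the asserted bound on $t$.

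Given that the sufficient-decrease lemma and the lower-boundedness of $L_\rho$ are supplied by the analysis for Theorem \ref{thm:convergence}, the remaining work here is essentially bookkeeping: the telescoping collapse, the identification of the limiting value $L^{*}$ with the penalized objective at $\mathbf{x}^*$, and the stopping-index estimate. The genuinely load-bearing ingredient is the explicit positive constant $C$ in the decrease inequality, and within Theorem \ref{iteration complexity} itself the only delicate point is correctly recognizing that the augmented Lagrangian converges to $\ell(\mathbf{x}^*)-\sum_{q}\frac{\alpha_q}{2}\|\mathbf{x}_q^*\|_2^2$ once the constraint residual is driven to zero; everything else is routine algebra.
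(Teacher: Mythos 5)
Your proposal is correct and follows essentially the same route as the paper's own proof: the sufficient-decrease inequality of Lemma \ref{lemma:L_difference} with the same constant $C$, a telescoping sum, the minimality of the stopping index $t$, and the identification of the limiting value of $L_{\rho}$ with $\ell(\mathbf{x}^*)-\sum_{q=1}^{Q}\frac{\alpha_q}{2}\Vert\mathbf{x}_q^*\Vert_2^2$ once the constraint residual vanishes. The only cosmetic deviations (an index slip in writing the residual identity, and the same mild off-by-one looseness in counting the pre-termination residuals that the paper's own argument exhibits) do not affect the validity of the argument.
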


\subsection{Computational complexity\protect\footnotemark}\label{complexity-sec}

  The overall computational complexity of the PS-ADMM detection algorithm consists of two parts: the first part, which is independent of the number of iterations, is required to compute the $\mathbf{x}_0 $ update of PS-ADMM in \eqref{eq: solution X0}, it is needs to be calculated only once when detecting each transmitted symbol vector. The first part of calculations is performed in three steps: first, the multiplication of the $U\times B$ matrix $\mathbf{H}^{H}$ by the $B\times U$ matrix $\mathbf{H}$; second, the computation of the inversion of the regularized Gramian matrix $\mathbf{H}^{H} \mathbf{H}+\rho\mathbf{I}$; and third, the computation of the  $U\times B$ matrix $\mathbf{H}^{H}$ by the $B\times 1$ vector $\mathbf{y}$ to obtain matched-filter vector $\mathbf{H}^{H}\mathbf{y}$. These steps require $\frac{1}{2}BU^2$, $\frac{1}{3}U^3$ and $BU$ complex multiplications, respectively.
  The second part, which is iteration dependent, is need to be repeated every iterationin in two steps: first, the ${Q}$ scalar multiplications by the $U$ vectors in \eqref{eq: solution_Xq}; second, a multiplication of the $U\times U$ matrix by the $U\times 1$ vector and the ${Q}$ scalar multiplications by the $U\times 1$ vectors in \eqref{eq: solution X0}. These steps require $\frac{1}{2}QU$, $U^2+\frac{1}{2}QU$ complex multiplications, respectively. Combining this result with Theorem \ref{iteration complexity}, we conclude that the total computational cost to attain an $\epsilon$-optimal solution is $\frac{1}{3}U^3+\frac{1}{2}BU^2+BU+K(U^2+QU)$, where the number of iterations $K=\frac{1}{C\epsilon}\Big(L_{\rho}(\{\mathbf{x}^{1}_q\}_{q=1}^{Q}, \mathbf{x}_0^{1}, \mathbf{y}^{1}) - \big(\frac{1}{2}\Vert\mathbf{r}-\mathbf{H}\mathbf{x}_0^* \Vert_2 ^{2} - \sum_{q=1}^{Q}\frac{\alpha_q}{2} \Vert\mathbf{x}_q^* \Vert_2 ^{2}\big)\!\Big)$. Since $K \ll B$ for massive MIMO detection, the computational complexity of the PS-ADMM is comparable to that of the linear detector.

\footnotetext{The computational complexity is measured by the number of complex-valued multiplications for $K$ iterations. A complex-valued multiplication is assumed to required four real-valued multiplications.}

\section{Simulation results}\label{sec:Simulation results}

In this section, numerical results are presented to show the effectiveness of the proposed PS-ADMM detector.
Specifically, in Section \ref{simulation-result-performance} we demonstrate the BER performance of the PS-ADMM detector compare with everal existing detectors.
In Section \ref{simulation-parameter-setting}, we focus on analyzing the impact of parameters $\rho$, $\{\alpha_q\}_{q=1}^{Q}$ and $K$ on performance of the PS-ADMM detector.

Throughout this section, we show simulation results for uncoded signal detection with the i.i.d rayleigh fading channel in different $B \times U \;(B \ge U)$ massive MU-MIMO systems. The modulation schemes of QPSK, 16-QAM and 64-QAM are employed. We assume perfect knowledge of the channel state information is exactly known at the receiver.
For a fair comparison, all the algorithms are implemented in Mathworks Matlab 2019a/Windows 7 environment on a computer with 3.7GHz Intel i3-6100$\times$2 CPU and 16GB RAM.

\begin{figure*}[htpb]
\subfigure[$B=128,U=16$ for QPSK.]{
    \begin{minipage}{8.5cm}
    \centering
        \includegraphics[width=3.5in,height=2.7in]{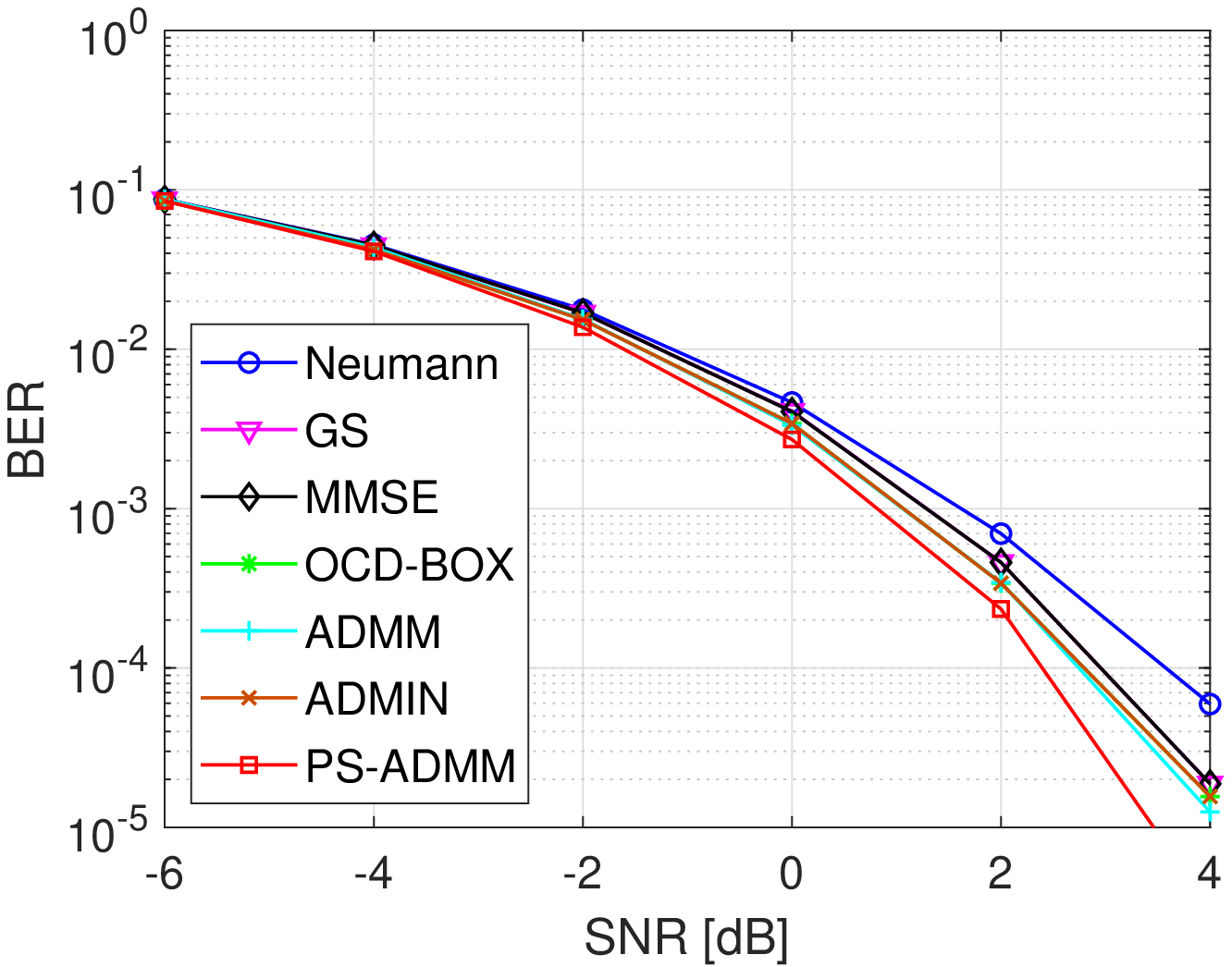}
            \label{ber-all-QPSK-128x16}
    \end{minipage}
    }
 \subfigure[$B=128,U=32$ for QPSK.]{
    \begin{minipage}{8.5cm}
    \centering
        \includegraphics[width=3.5in,height=2.7in]{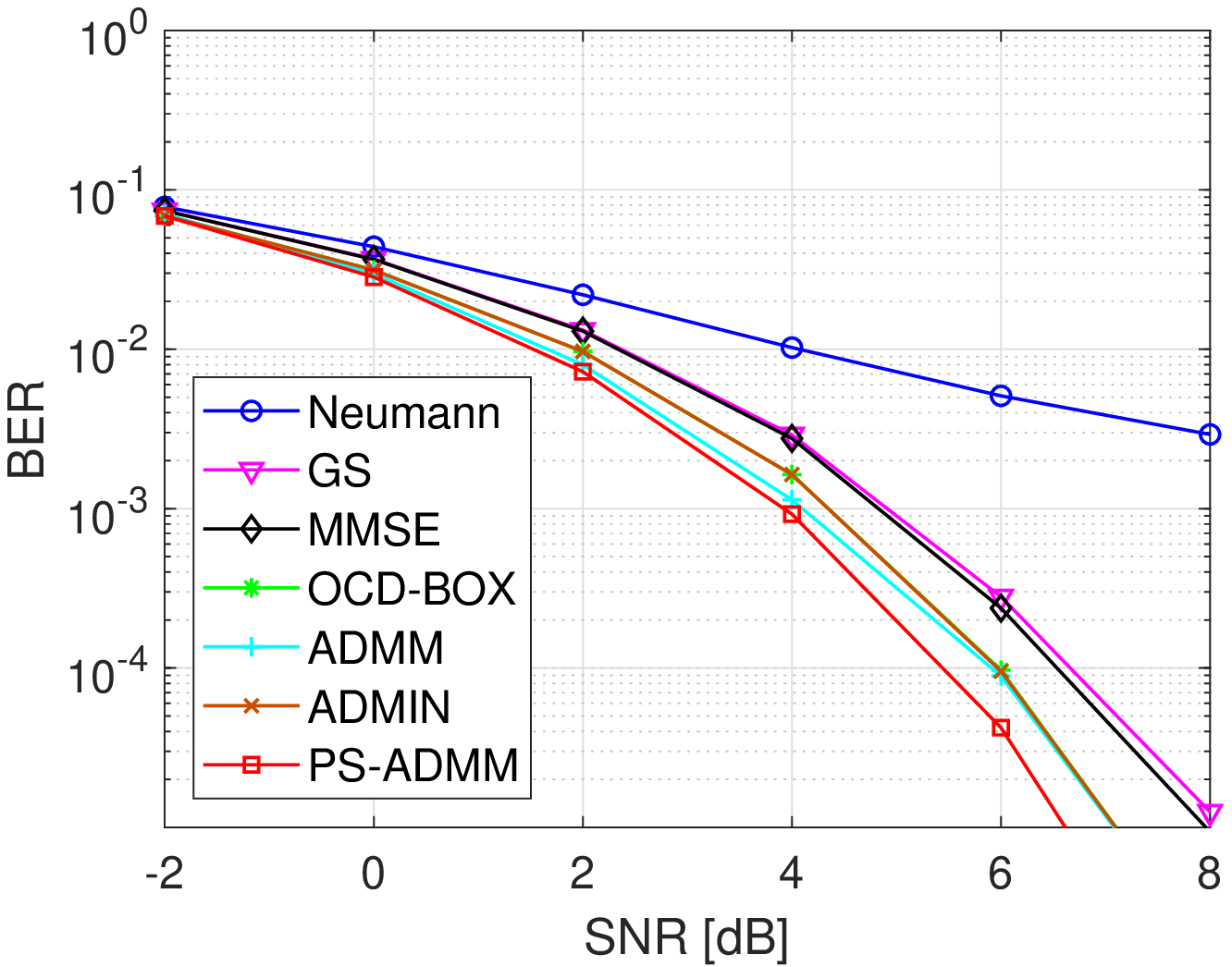}
            \label{ber-all-QPSK-128x32}
    \end{minipage}
    }

\subfigure[$B=128,U=64$ for QPSK.]{
    \begin{minipage}{8.5cm}
    \centering
        \includegraphics[width=3.5in,height=2.7in]{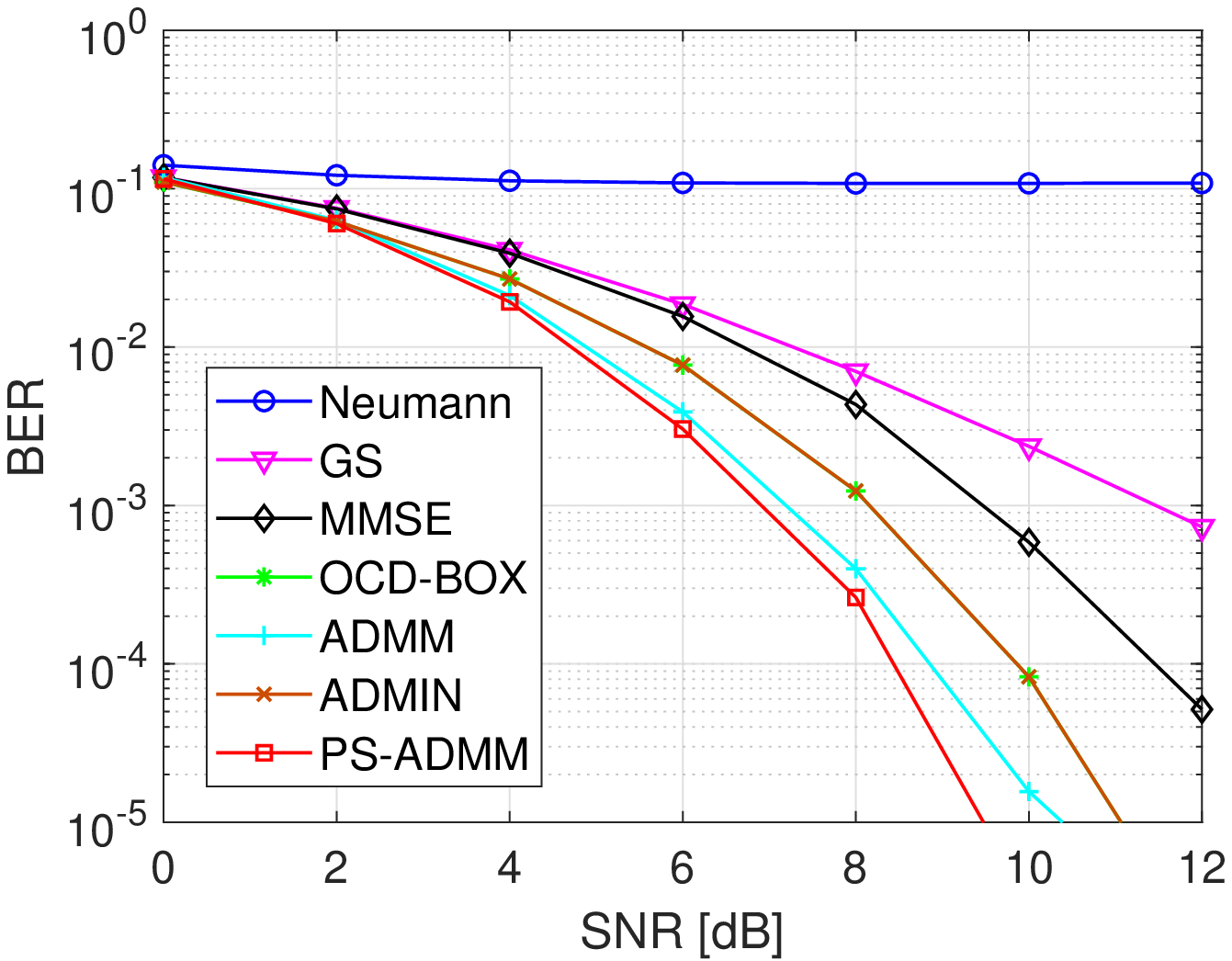}
            \label{ber-all-QPSK-128x64}
    \end{minipage}
    }
 \subfigure[$B=128,U=128$ for QPSK.]{
    \begin{minipage}{8.5cm}
    \centering
        \includegraphics[width=3.5in,height=2.7in]{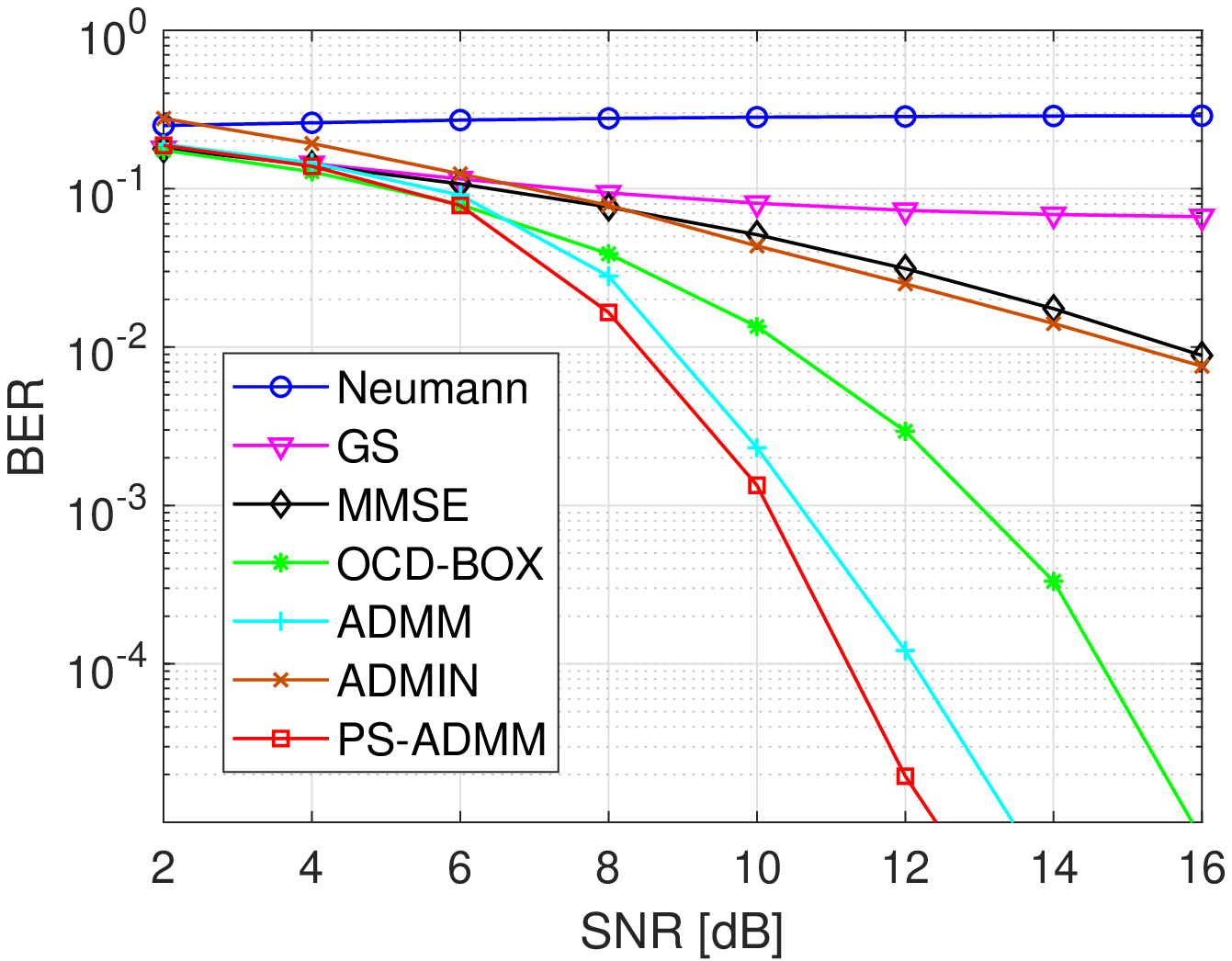}
            \label{ber-all-QPSK-128x128}
    \end{minipage}
    }
 \subfigure[$B=128,U=16$ for 16-QAM.]{
    \begin{minipage}{8.5cm}
    \centering
        \includegraphics[width=3.5in,height=2.7in]{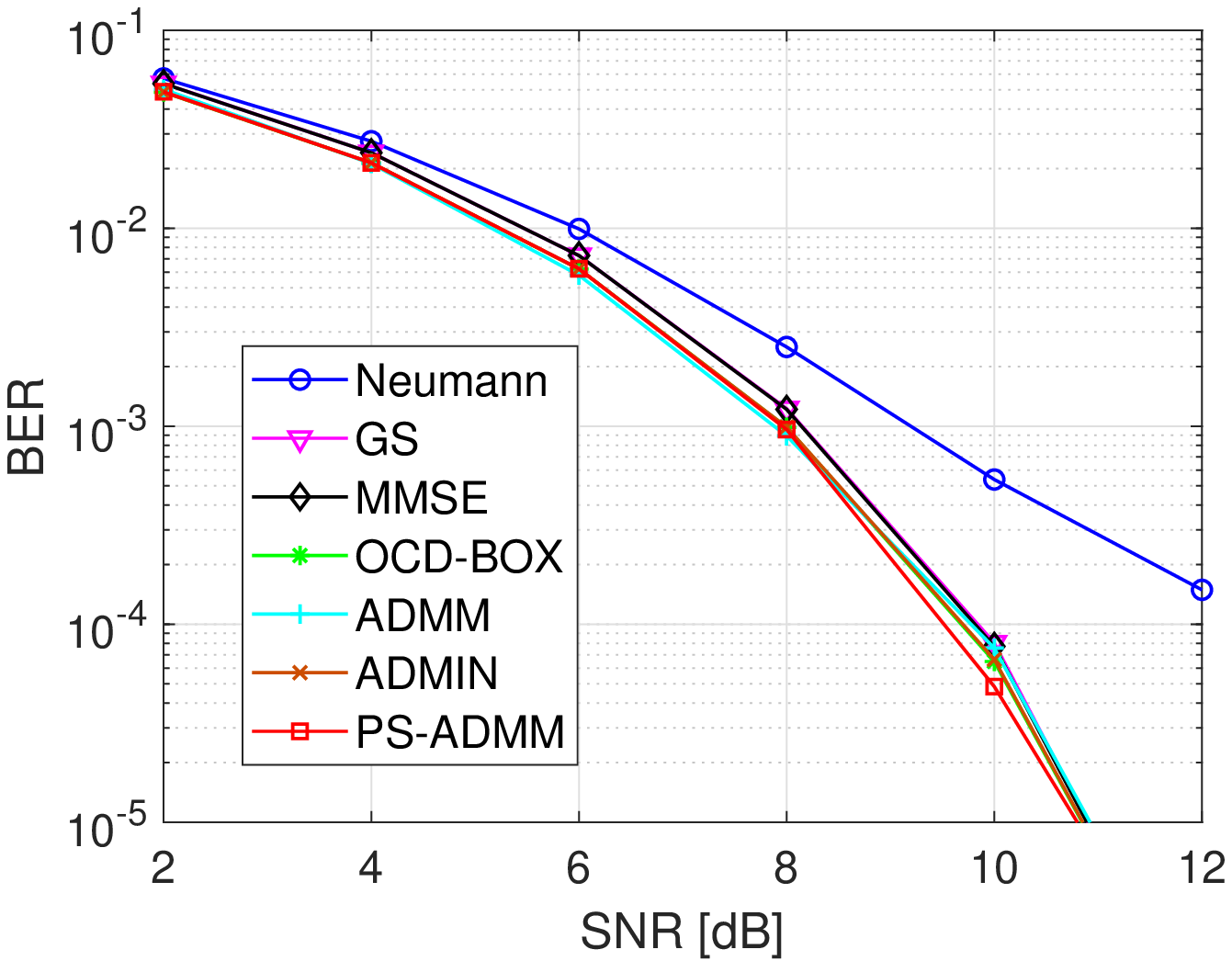}
            \label{ber-all-16qam-128x16}
    \end{minipage}
    }
 \subfigure[$B=128,U=32$ for 16-QAM.]{
    \begin{minipage}{8.5cm}
    \centering
        \includegraphics[width=3.5in,height=2.7in]{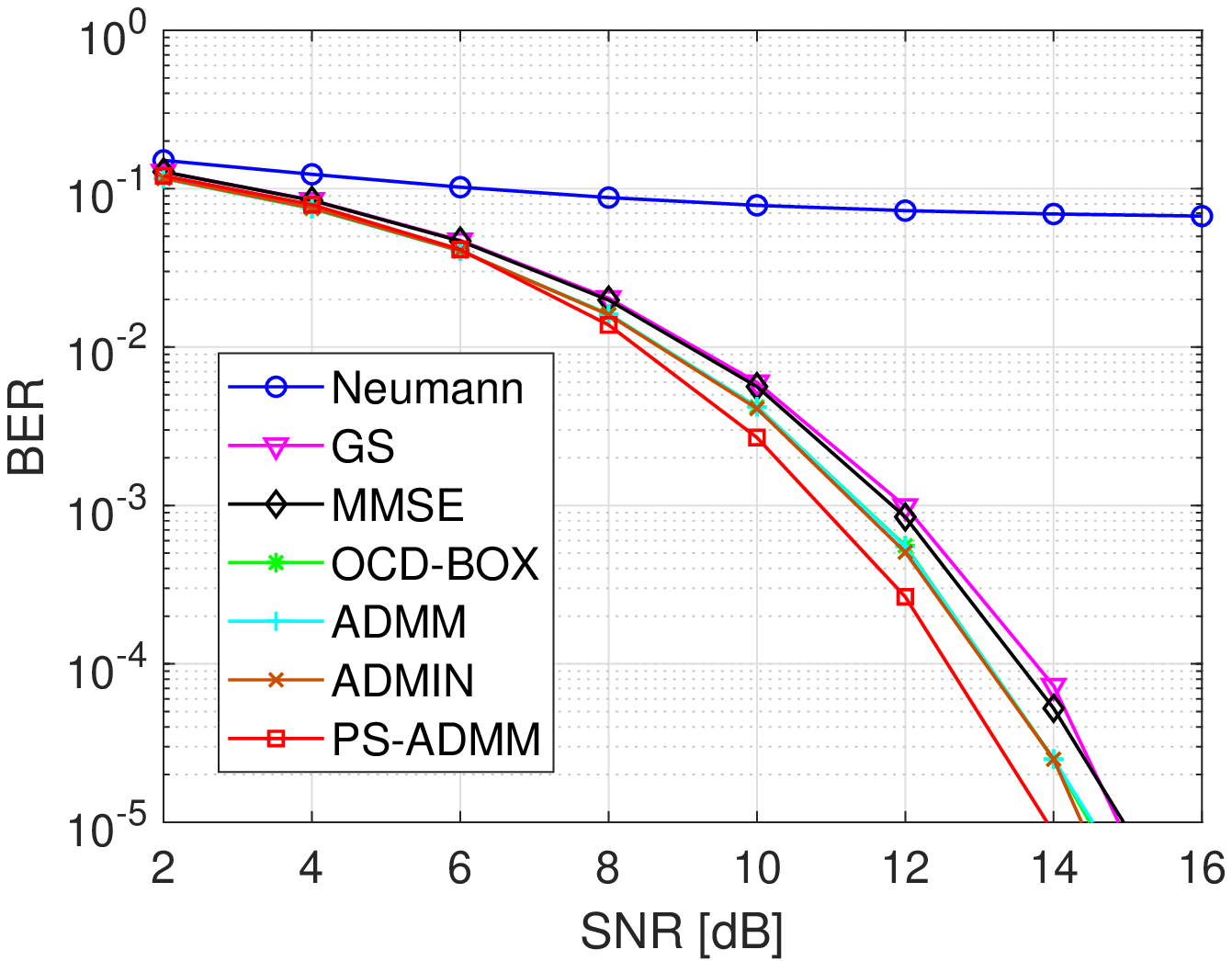}
            \label{ber-all-16qam-128x32}
    \end{minipage}
    }

\end{figure*}
\addtocounter{figure}{-1} 
\begin{figure*}
\addtocounter{figure}{1} 

 \subfigure[$B=128,U=64$ for 16-QAM.]{
    \begin{minipage}{8.5cm}
    \centering
        \includegraphics[width=3.5in,height=2.7in]{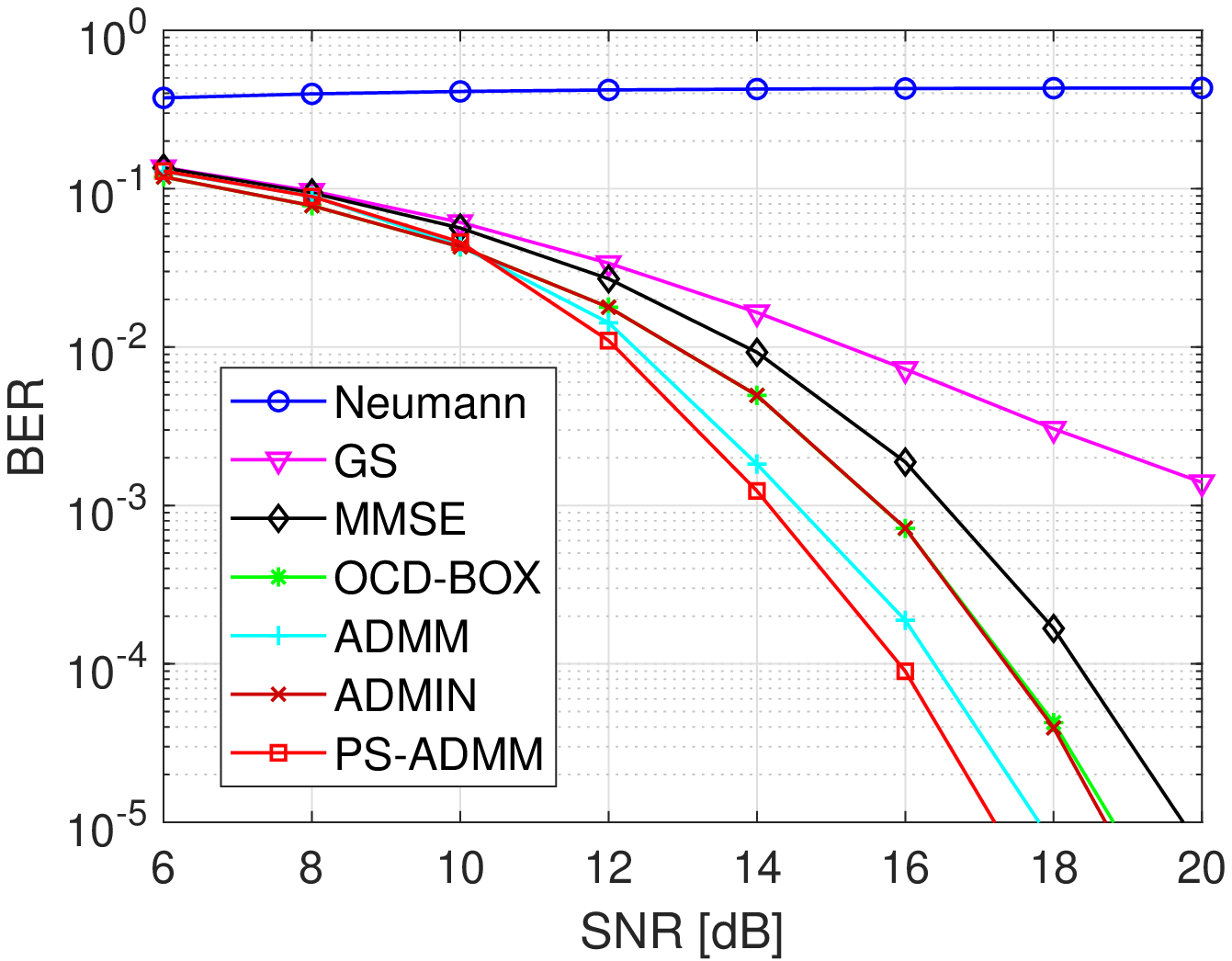}
            \label{ber-all-16qam-128x64}
    \end{minipage}
    }
 \subfigure[$B=128,U=128$ for 16-QAM.]{
    \begin{minipage}{8.5cm}
    \centering
        \includegraphics[width=3.5in,height=2.7in]{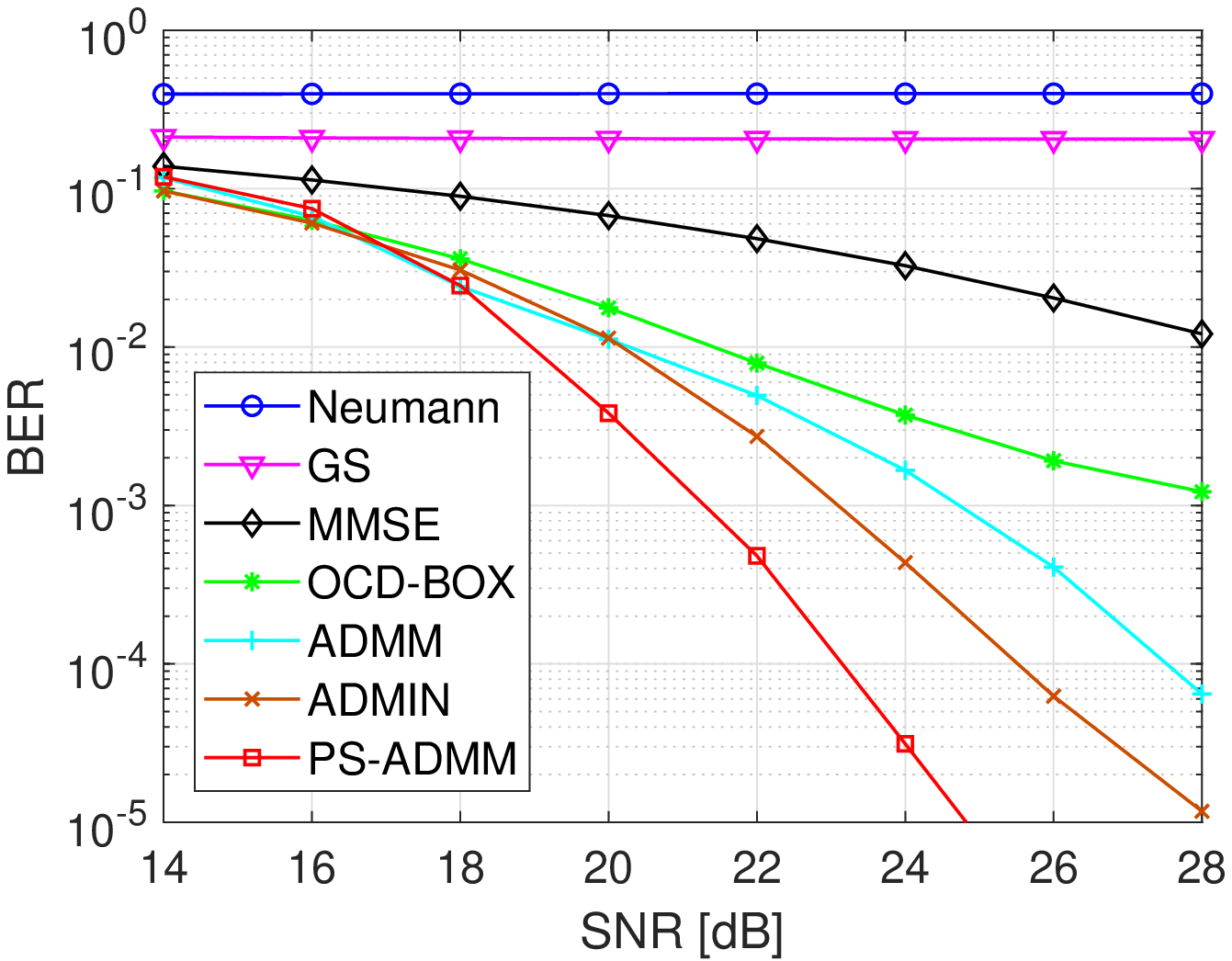}
            \label{ber-all-16qam-128x128}
    \end{minipage}
    }
\subfigure[$B=128,U=16$ for 64-QAM.]{
    \begin{minipage}{8.5cm}
    \centering
        \includegraphics[width=3.5in,height=2.7in]{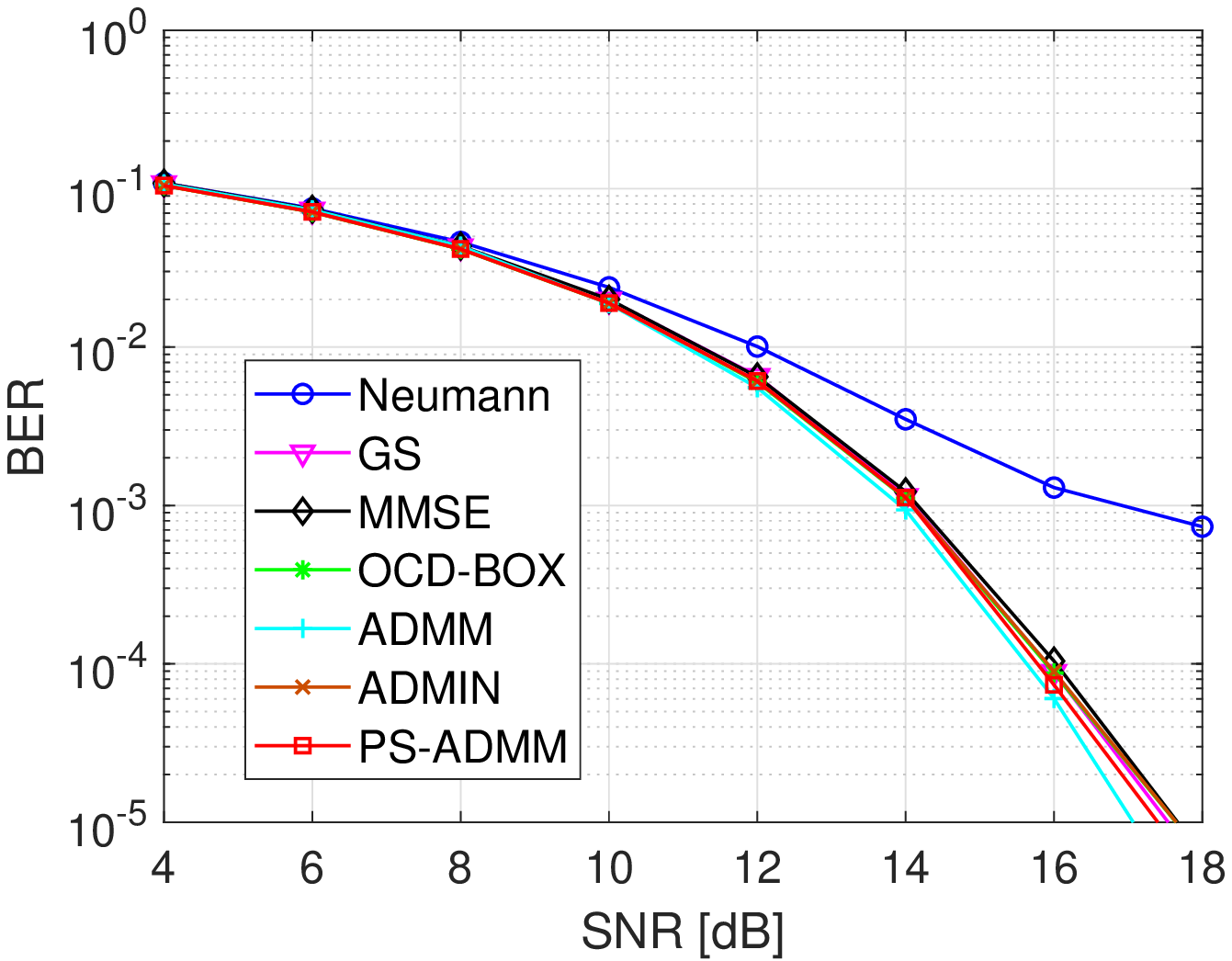}
            \label{ber-all-64qam-128x16}
    \end{minipage}
    }
\subfigure[$B=128,U=32$ for 64-QAM.]{
    \begin{minipage}{8.5cm}
    \centering
        \includegraphics[width=3.5in,height=2.7in]{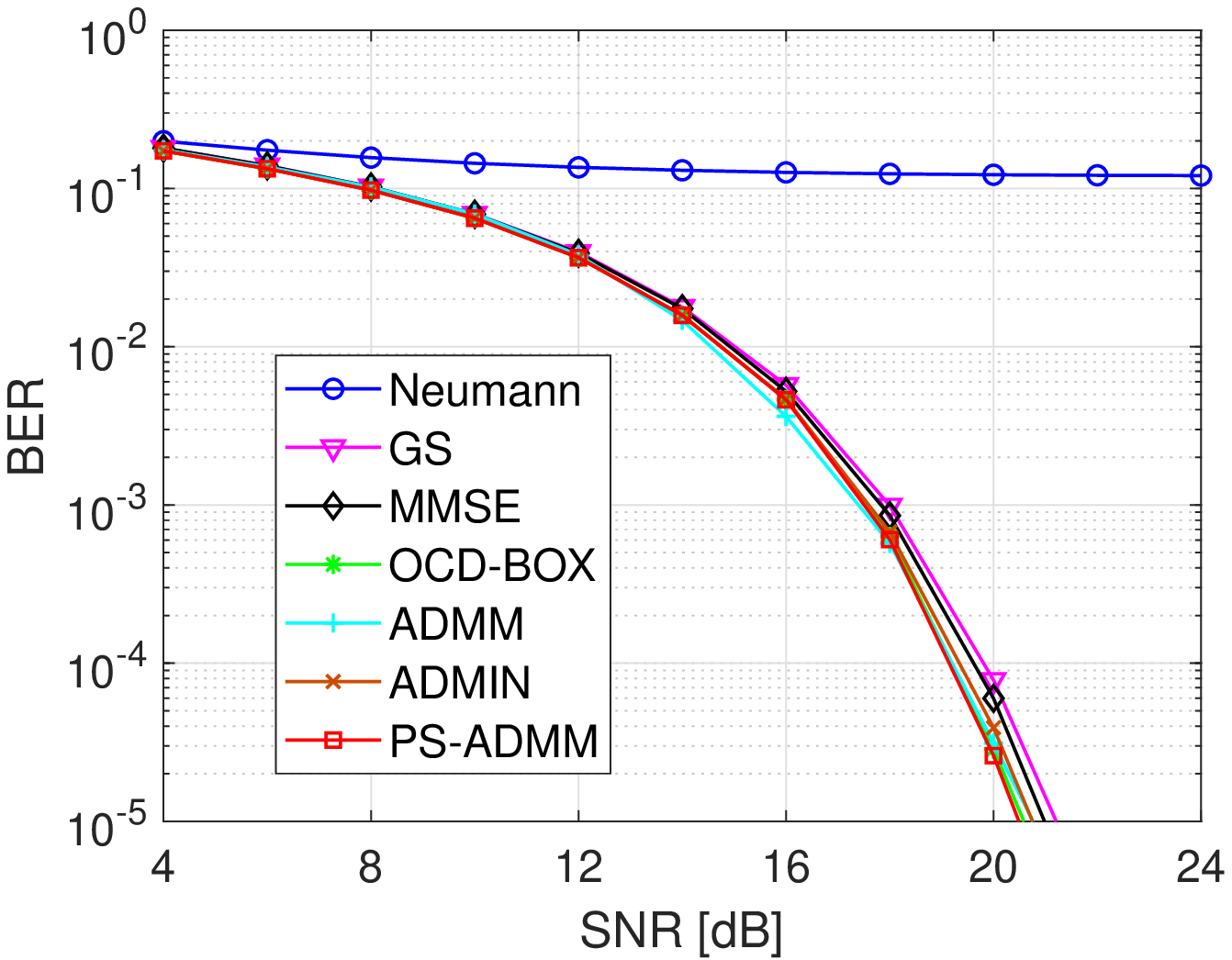}
            \label{ber-all-64qam-128x32}
    \end{minipage}
    }
\subfigure[$B=128,U=64$ for 64-QAM.]{
    \begin{minipage}{8.5cm}
    \centering
        \includegraphics[width=3.5in,height=2.7in]{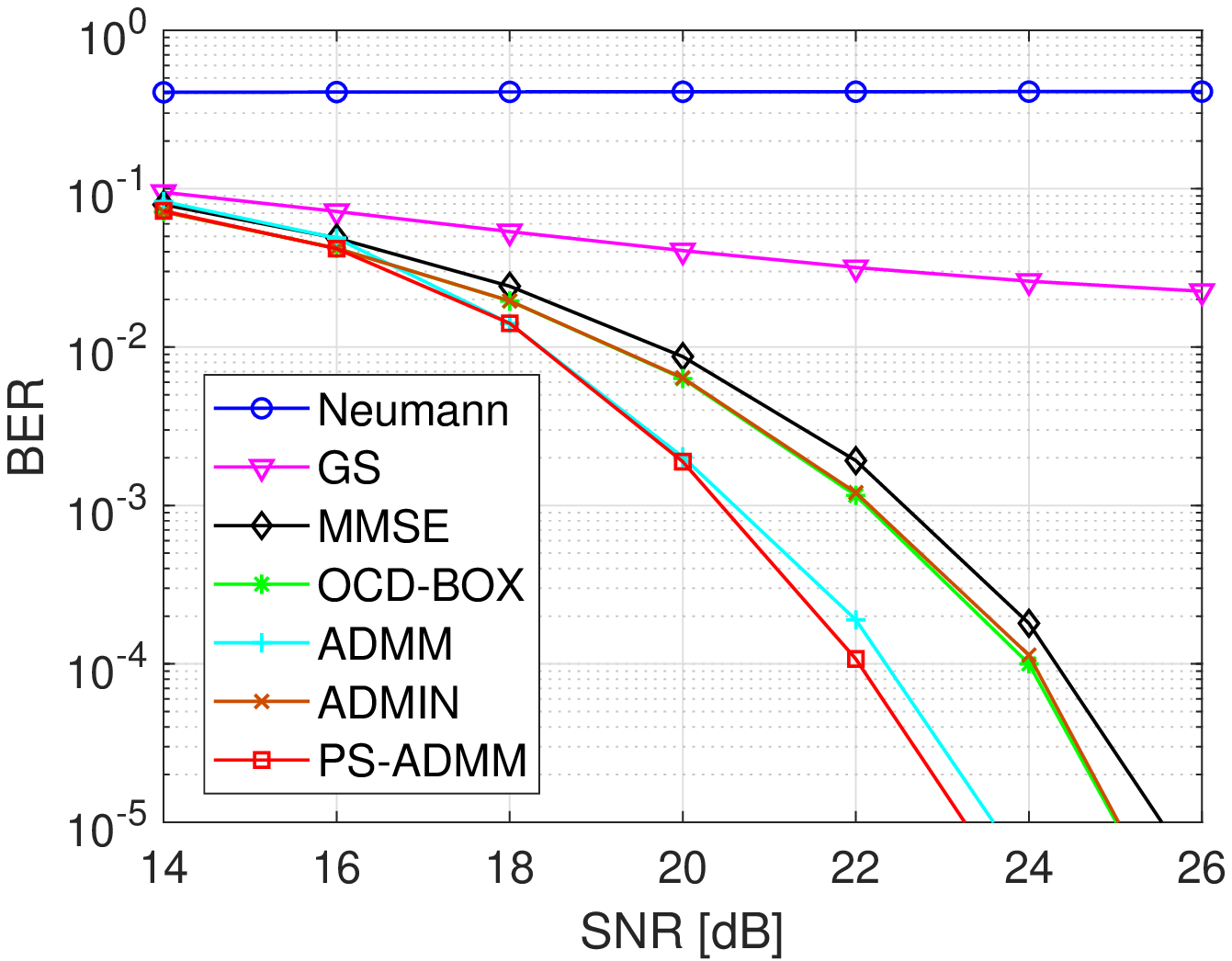}
            \label{ber-all-64qam-128x64}
    \end{minipage}
    }
  \subfigure[$B=128,U=128$ for 64-QAM.]{
    \begin{minipage}{8.5cm}
    \centering
        \includegraphics[width=3.5in,height=2.7in]{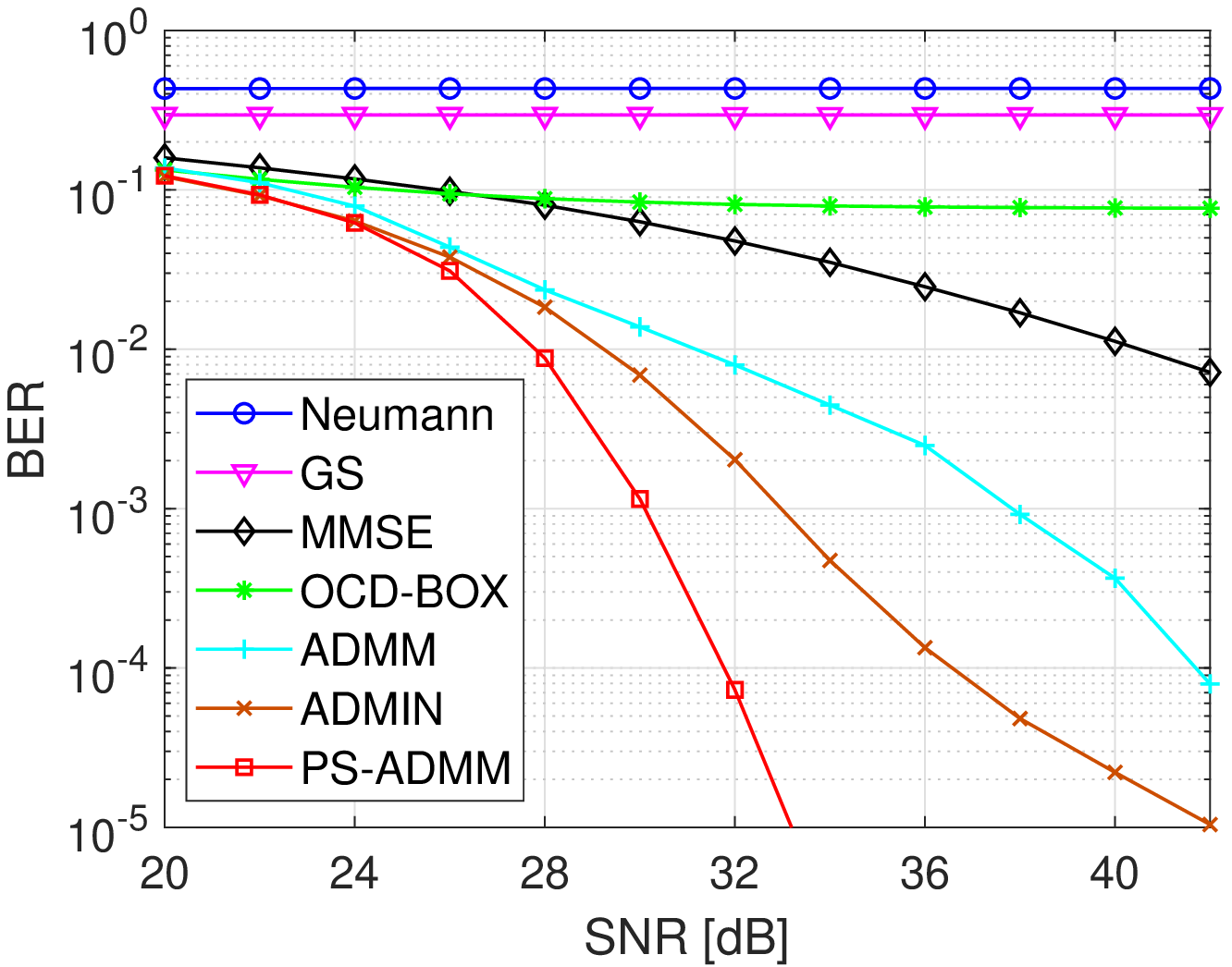}
            \label{ber-all-64qam-128x128}
    \end{minipage}
    }
    \centering
    \caption{Comparisons of BER performance using various massive MIMO detectors.}
    \label{ber_allalgo}
 \end{figure*}

\begin{figure}[tp]
   \subfigure[BER performance vs. $\rho$.]{
    \begin{minipage}{9cm}
    \centering
        \includegraphics[width=3.5in,height=2.7in]{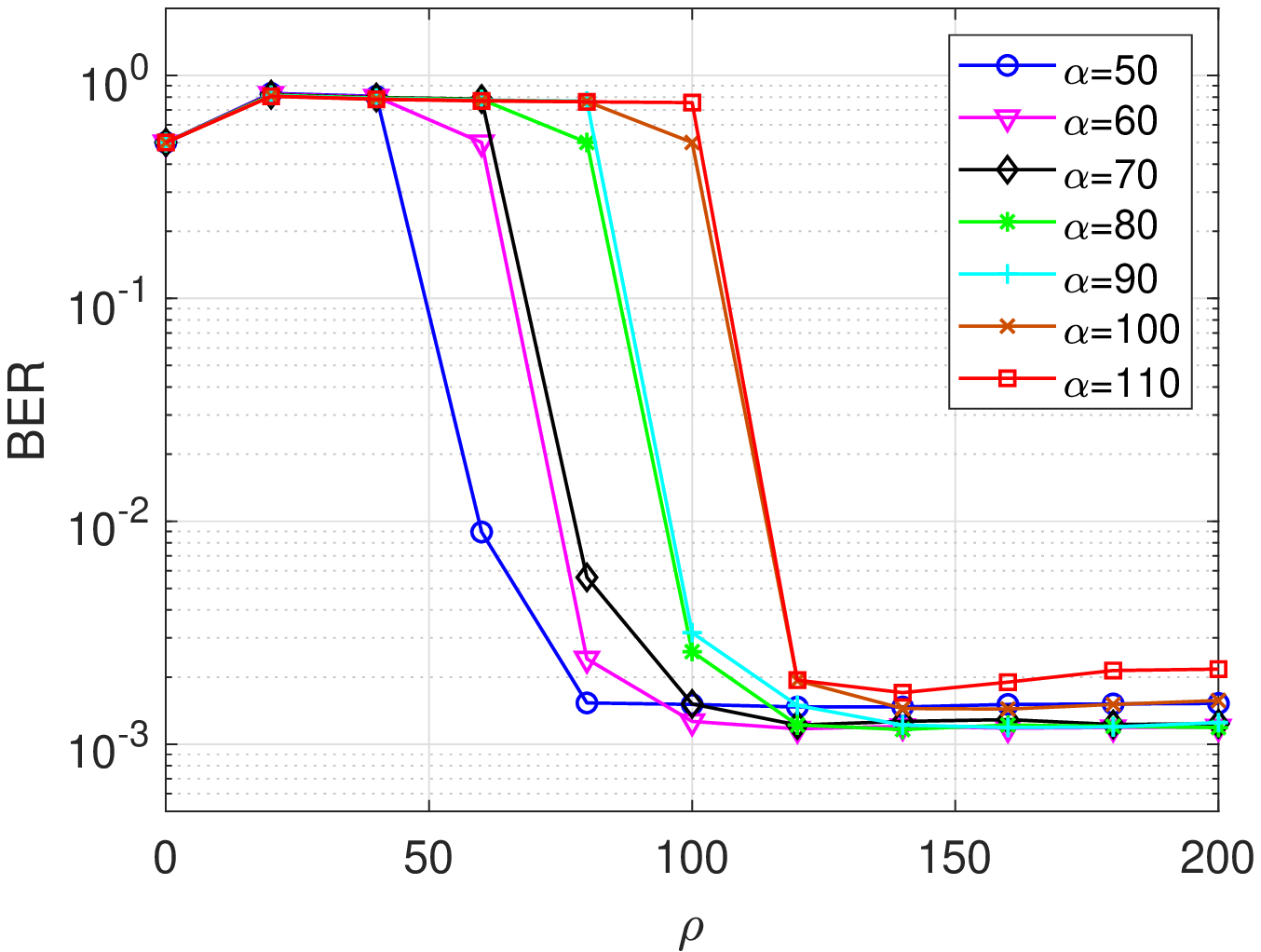}
            \label{ber_rho_qpsk}
    \end{minipage}
    }
   \subfigure[BER performance vs. $\alpha$.]{
    \begin{minipage}{9cm}
    \centering
        \includegraphics[width=3.5in,height=2.7in]{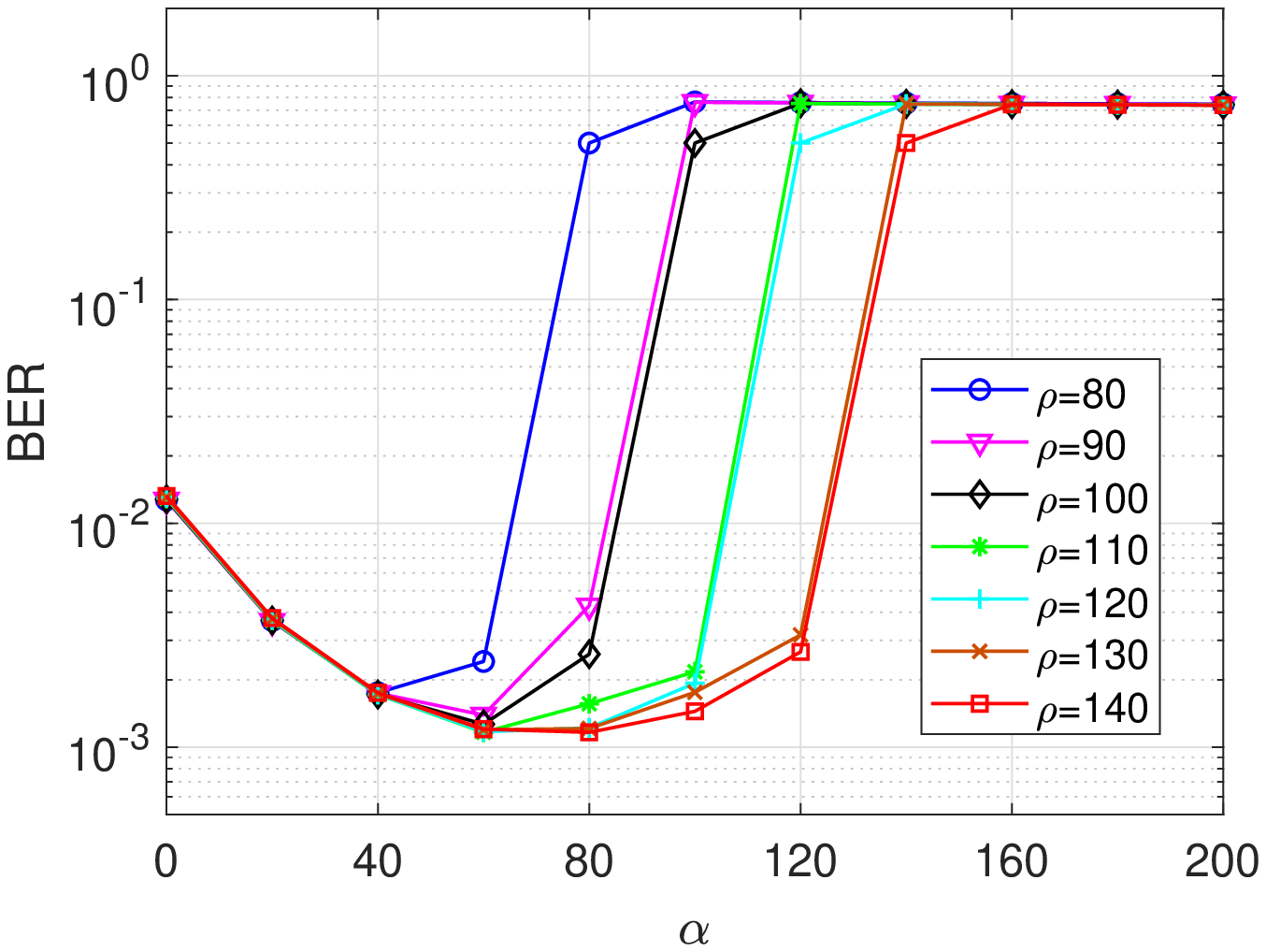}
            \label{ber_alpha_qpsk}
    \end{minipage}
    }
 \centering
 \caption{The impact of $\rho$ and $\alpha$ on BER performance of the PS-ADMM detector for $B=128,U=128$, SNR=10dB, QPSK modulation.}
 \label{impact_BER}
\end{figure}

 \begin{figure}[tp]

    \subfigure[Convergence characteristic comparison when $\rho=300$.]
    {
    \begin{minipage}{9cm}
    \centering
        \includegraphics[width=3.5in,height=2.7in]{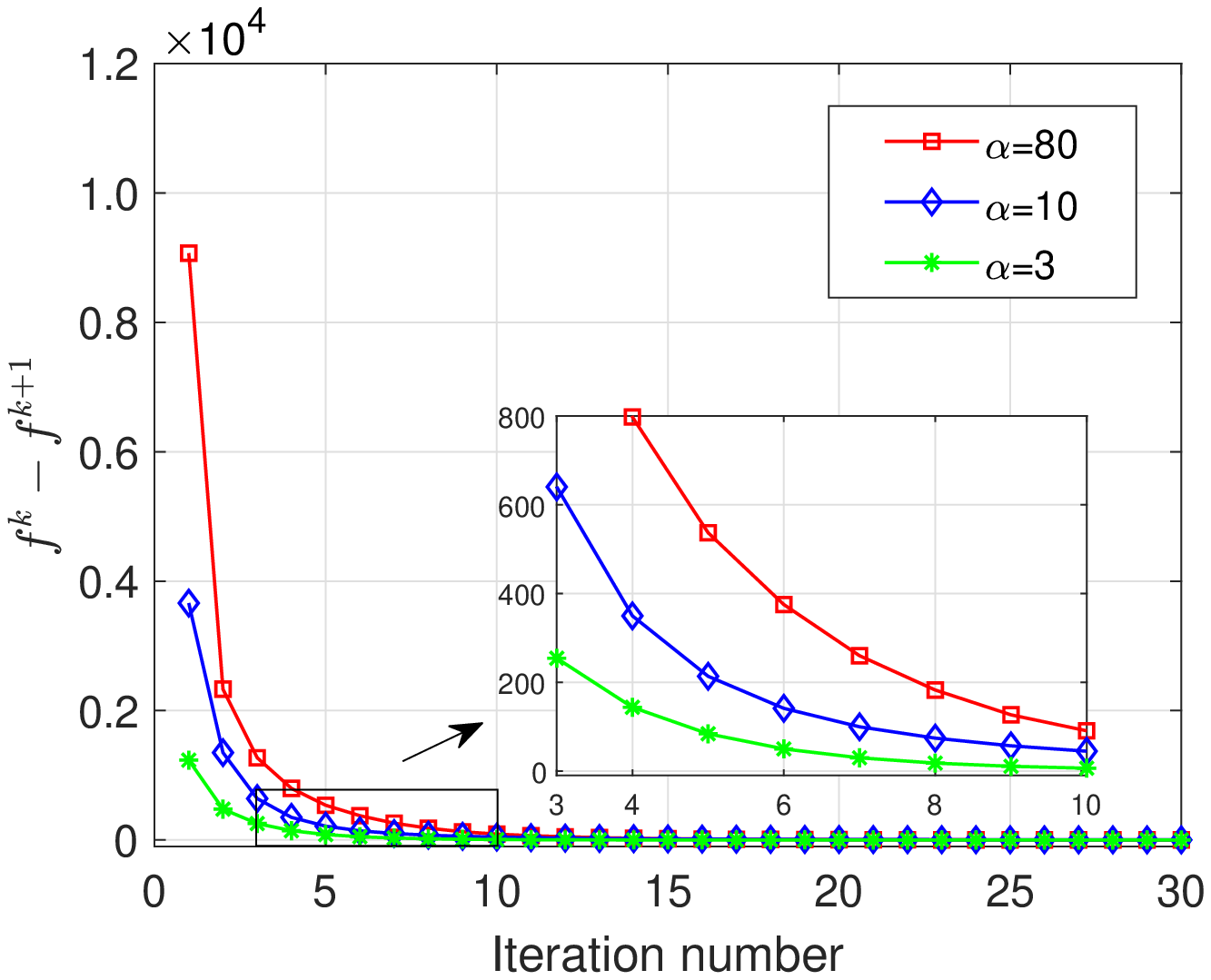}
            \label{conv_alpha_qpsk}
    \end{minipage}
    }
        \subfigure[Convergence characteristic comparison when $\alpha=80$.]
    {
    \begin{minipage}{9cm}
    \centering
        \includegraphics[width=3.5in,height=2.7in]{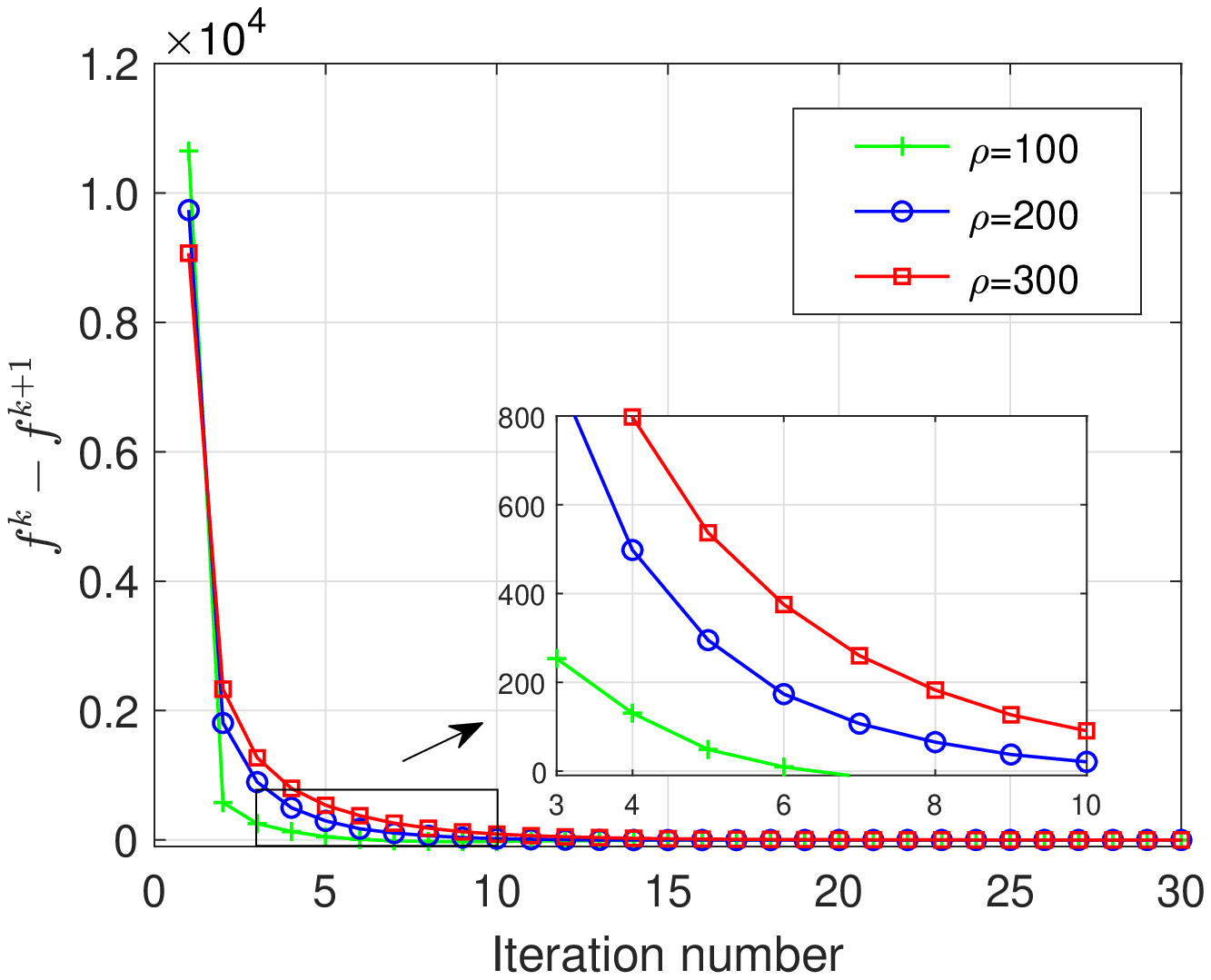}
            \label{conv_rho_qpsk}
    \end{minipage}
    }
 \centering
 \caption{The impact of $\rho$ and $\alpha$ on Convergence performance of the PS-ADMM detector for $B=128,U=128$, SNR=10dB, QPSK modulation.}
 \label{impact_conv}
\end{figure}

\subsection{BER performance}\label{simulation-result-performance}
In this subsection, the BER performance of the proposed PS-ADMM detector was evaluated and compared with some conventional and state-of-the-art MIMO detectors by numerical simulations. The considered detectors are 
classical MMSE detector, Neumann detector \cite{wu2014large}, GS detector \cite{wu2016efficient}, OCD-BOX detector \cite{wu2016high}, and two ADMM-based detectors ADMM and ADMIN in \cite{7526551} and \cite{shahabuddin2017admm} respectively. The penalty parameters $\rho$ and $\{\alpha_q\}_{q=1}^{Q}$ are chosen the best value in different simulation scenarios respectively. The termination criteria is set as the $K=30$ is reached. The points plotted in all BER curves are based on generating at least 1000 Monte-Carlo trials.

 Fig.\,\ref{ber_allalgo} shows BER performance of considered detectors for QPSK, 16-QAM and 64-QAM modulation with different number of transmit antennas massive MIMO systems. 
 In Fig.\,\ref{ber_allalgo}, one can see that BER curves of all detectors have a similar changing trend at low SNRs and continues to drop in a waterfall manner in relatively high SNR regions. We observe that PS-ADMM detector achieves the best BER performance.
 Observing Fig.\,\ref{ber-all-QPSK-128x16}-\ref{ber-all-QPSK-128x64}, \ref{ber-all-16qam-128x16}-\ref{ber-all-16qam-128x64}, and \ref{ber-all-64qam-128x16}-\ref{ber-all-64qam-128x64}, we can find that all detectos are displays comparable BER performance when the BS-to-user-antenna ratio is more than two, only the approximate matrix inversion algorithms such as Neumann and GS suffer from a performance loss.
 In Fig.\,\ref{ber-all-QPSK-128x128},\,\ref{ber-all-16qam-128x128} and \,\ref{ber-all-64qam-128x128}, for the more challenging square 128 $\times$ 128 massive MIMO systems, we see that the PS-ADMM exhibits excellent performance and outperforms that of all the other detectors.
 It can be concluded that the proposed PS-ADMM detector achieves excellent BER performance for different massive MIMO systems, especially when the BS-to-user-antenna ratio is close to one.

 \subsection{Choice of Parameters} \label{simulation-parameter-setting}

We show several simulations for the significantly affect on detecting performance by the choosing of $\rho$ and $\alpha$ for QPSK modulationin system in Fig.\,\ref{impact_BER} and Fig.\,\ref{impact_conv}.
From them, we can choose proper parameters $\rho$ and $\alpha$ to achieve the lower BER and faster convergence.

In Fig.\,\ref{impact_BER}, the BER performance as a function of the penalty parameter $\rho$ and $\alpha$ are presented.
We make the following observations from them, first, all of the curves of the BER performance display a similar changing trend as parameters $\rho$ and $\alpha$ change. Second, BER performance is sensitive to parameters $\alpha$ and be slightly affected by $\rho$ when $\rho$ is large enough, a usable BER can be abttained only when $\rho>\alpha$ is satisfied. Third, a set of parameters $\rho$ and $\alpha$ can achieves the lowest BER performance, as we increases or decreases the optimal $\alpha$, the corresponding BER increases. If $\alpha$ even decreases to zero, the PS-ADMM detector becomes the conventional ADMM detector with box constraint. 

In Fig.\,\ref{impact_conv}, we study the effects of the penalty parameters $\rho$ and $\alpha$ on convergence characteristic. To illustrate the convergence curve clearly, we choose the linear terms as the vertical axis, where $f^{k}$ denotes the objective in \eqref{eq:PS_ADMM} at $k$ iteration.
From it, we can observe that, first, all of the curves show a similar tendency that convergence performance change over the iteration numbers with different parameters $\rho$ and $\alpha$.
Second, parameters $\rho$ and $\alpha$ can affect the convergence rate of the PS-ADMM algorithm. The smaller the penalty parameter $\rho$ and $\alpha$, the faster the optimization problem to converge.
Third, the number of iterations $K$ does not significantly affect the convergence characteristic of objective when $K>20$. The PS-ADMM algorithm is performed within a few tens of iterations to converge to a modest accuracy solutions, which is sufficient for the kinds of large-scale problems such as massive MIMO detection.

\section{Conclusion}
\label{sec:Conclusion}

In this paper, to the best of our knowledge, we proposed the first signal detection that utilizes penalty-sharing ADMM approach. Our novel detector called PS-ADMM is suitable for massive MIMO systems with high-order QAM modulations. The key idea is to add a set of penalty terms that converted from high-dimensional signals to the objective function of formulated sharing ML optimization problem with bound relaxation constraints, and then the sharing-ADMM technique be implemented in parallel to solve the nonconvex problem. We prove that the proposed PS-ADMM approach is theoretically-guaranteed convergent if proper parameters are chosen. Compared with several existing massive MIMO detectors, the PS-ADMM detector demonstrates excellent BER performance while cheap computational complexity providing, especially when the BS-to-user-antenna ratio is close to one. In addition, though our focus is on massive MIMO detection, the proposed PS-ADMM method would also be a natural fit for high-dimensional large-scale nonconvex optimization problem in some practical applications.

\appendices

\section{Proof of Lemmas \ref{lemma:z1}--\ref{lemma:L_bounded1}}\label{lemma1-3}
Our convergence analysis consists of a series of lemmas, we first show that the size of the successive difference of the dual variables can be bounded above by that of the auxiliary variables.
\begin{lemma}\label{lemma:z1}
For Algorithm 1 , the following is true
\begin{align}\label{eq:y_difference1}
\|\mathbf{y}^{k+1}-\mathbf{y}^k\|_2^2 \le \lambda_{\rm max}^2(\mathbf{H}^H\mathbf{H})\|\mathbf{x}_0^{k+1}-\mathbf{x}_0^k\|_2^2.
\end{align}
\end{lemma}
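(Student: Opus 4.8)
The plan is to eliminate the primal residual from the dual update \eqref{eq:y_update} by invoking the first-order optimality condition of the $\mathbf{x}_0$-subproblem \eqref{eq:x0_update}. Since $\mathbf{x}_0^{k+1}$ minimizes the strongly convex quadratic $L_{\rho}(\{\mathbf{x}_q^{k+1}\}_{q=1}^{Q},\mathbf{x}_0,\mathbf{y}^k)$, it satisfies $\nabla_{\mathbf{x}_0}L_{\rho}=0$, i.e.
\[
-\mathbf{H}^H\big(\mathbf{r}-\mathbf{H}\mathbf{x}_0^{k+1}\big)+\mathbf{y}^k+\rho\Big(\mathbf{x}_0^{k+1}-\sum_{q=1}^{Q}2^{q-1}\mathbf{x}_q^{k+1}\Big)=0,
\]
which is precisely the relation encoded by the closed form \eqref{eq: solution X0}. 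The key observation is that the last term is exactly the dual increment: by \eqref{eq:y_update}, $\rho\big(\mathbf{x}_0^{k+1}-\sum_{q=1}^{Q}2^{q-1}\mathbf{x}_q^{k+1}\big)=\mathbf{y}^{k+1}-\mathbf{y}^k$.

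Substituting this into the stationarity condition collapses it to the compact identity
\[
\mathbf{y}^{k+1}=\mathbf{H}^H\big(\mathbf{r}-\mathbf{H}\mathbf{x}_0^{k+1}\big)=\mathbf{H}^H\mathbf{r}-\mathbf{H}^H\mathbf{H}\,\mathbf{x}_0^{k+1},
\]
valid at every iteration in which the $\mathbf{x}_0$-update has been executed. Writing the same identity one step earlier and subtracting gives
\[
\mathbf{y}^{k+1}-\mathbf{y}^k=-\mathbf{H}^H\mathbf{H}\big(\mathbf{x}_0^{k+1}-\mathbf{x}_0^k\big),
\]
so the successive difference of the dual variable is driven entirely by that of the auxiliary variable $\mathbf{x}_0$.

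To finish I would take squared $2$-norms and apply submultiplicativity, $\|\mathbf{H}^H\mathbf{H}(\mathbf{x}_0^{k+1}-\mathbf{x}_0^k)\|_2\le\|\mathbf{H}^H\mathbf{H}\|_2\,\|\mathbf{x}_0^{k+1}-\mathbf{x}_0^k\|_2$, together with the fact that $\mathbf{H}^H\mathbf{H}$ is Hermitian positive semidefinite, so its spectral norm equals its largest eigenvalue, $\|\mathbf{H}^H\mathbf{H}\|_2=\lambda_{\rm max}(\mathbf{H}^H\mathbf{H})$. Squaring then yields \eqref{eq:y_difference1} directly. I do not expect a genuine obstacle: the entire content is the algebraic cancellation of the stationarity condition against the dual update, after which the estimate is a routine operator-norm bound. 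The only delicate point is indexing—the closed form for $\mathbf{y}^k$ presupposes that the $\mathbf{x}_0$-update already produced $\mathbf{x}_0^k$, so the inequality is asserted for iterates past the initialization; since the convergence analysis only uses the tail of the sequence, this causes no difficulty.
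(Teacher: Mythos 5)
Your proposal is correct and follows essentially the same route as the paper: both use the stationarity condition of the $\mathbf{x}_0$-subproblem together with the dual update \eqref{eq:y_update} to obtain $\mathbf{y}^{k+1}=-\nabla_{\mathbf{x}_0}\ell(\mathbf{x}_0^{k+1})$, and then bound the dual increment by $\lambda_{\rm max}(\mathbf{H}^H\mathbf{H})\,\|\mathbf{x}_0^{k+1}-\mathbf{x}_0^k\|_2$ (the paper phrases this as Lipschitz continuity of $\nabla\ell$, you phrase it as the spectral-norm bound on the affine gradient map, which is the same estimate since $\ell$ is quadratic). Your explicit remark on the indexing issue at initialization is a point the paper leaves implicit, but it does not change the argument.
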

\begin{proof}
From the $\mathbf{x}_0$ in update step \eqref{eq:x0_update}, we have the following optimality condition
\begin{align}\label{eq:x_nabla_expression}
\nabla_{\mathbf{x}_0} \ell\left(\mathbf{x}_0^{k+1}\right)+\mathbf{y}^k+\rho (\mathbf{x}_0^{k+1}-\sum_{q=1}^{Q}2^{q-1} \mathbf{x}_q^{k+1})=0.
\end{align}
Combined with the dual variable update step \eqref{eq:y_update} we obtain
\begin{equation}\label{eq:z_nabla_expression}
\mathbf{y}^{k+1}=-\nabla_{\mathbf{x}_0} \ell\left(\mathbf{x}_0^{k+1}\right).
\end{equation}
Since we known
\begin{equation}
\nabla_{\mathbf{x}_0}^2 \ell\left(\mathbf{x}_0^{k+1}\right) =\mathbf{H}^{H} \mathbf{H} \preceq \lambda_{\rm max}(\mathbf{H}^{H} \mathbf{H})\mathbf{I},\nonumber
\end{equation}
i.e.,there exists such Lipschitz continuous that
\begin{equation}\label{eq:boundy_expression}
\begin{split}
&\|\nabla_{\mathbf{x}_0} \ell\left(\mathbf{x}_0^{k+1}\right)-\nabla_{\mathbf{x}_0} \ell\left(\mathbf{x}_0^{k}\right)\|_2^2 \\
 \le &\lambda_{\rm max}^2(\mathbf{H}^{H} \mathbf{H}) \|\mathbf{x}_0^{k+1}-\mathbf{x}_0^{k}\|_2^2.
 \end{split}
\end{equation}
Combining \eqref{eq:z_nabla_expression} with \eqref{eq:boundy_expression}, we can obtain
\begin{align*}
\begin{split}
&\|\mathbf{y}^{k+1}-\mathbf{y}^k\|_2^2 \\
=&\|\nabla_{\mathbf{x}_0} \ell\left(\mathbf{x}_0^{k+1}\right)-\nabla_{\mathbf{x}_0} \ell\left(\mathbf{x}_0^{k}\right)\|_2^2 \\
\le & \lambda_{\rm max}^2(\mathbf{H}^{H} \mathbf{H}) \|\mathbf{x}_0^{k+1}-\mathbf{x}_0^{k}\|_2^2.
\end{split}
\end{align*}
This completes the proof.
$\hfill\blacksquare$
\end{proof}
\begin{lemma}\label{lemma:L_difference}
Assume parameters $\{\alpha_q\}_{q=1}^{Q}$ and $\rho$ satisfy $4^{q-1}\rho>\alpha_q$ and $\rho > \sqrt{2} \lambda_{\rm max}(\mathbf{H}^H\mathbf{H})$. Then for Algorithm 1, we have the following
\begin{align}\label{eq:L_difference}
&L_{\rho}\left (\{\mathbf{x}_q^{k+1}\}_{q=1}^{Q}, \mathbf{x}_0^{k+1}, \mathbf{y}^{k+1}\right)-L_{\rho}\left (\{\mathbf{x}_q^{k}\}_{q=1}^{Q}, \mathbf{x}_0^{k}, \mathbf{y}^{k} \right)\nonumber\\
&\le \sum_{q=1}^{Q}-\frac{\gamma_q(\rho)}{2}\|\mathbf{x}_q^{k+1}-\mathbf{x}_{q}^{k}\|_2^2 \nonumber\\
&-\Big(\frac{\gamma(\rho)}{2}-\frac{\lambda_{\rm max}^2(\mathbf{H}^{H} \mathbf{H})}{\rho}\Big)\|\mathbf{x}_0^{k+1}-\mathbf{x}_0^k\|_2^2,
\end{align}
\end{lemma}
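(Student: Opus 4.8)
The plan is to split the one-step change of the potential $L_\rho$ into three block contributions---one for the dual update \eqref{eq:y_update}, one for the $\mathbf{x}_0$ update \eqref{eq:x0_update}, and one for the sequential $\mathbf{x}_q$ updates \eqref{eq:x_q_update}---by inserting the intermediate iterates $(\{\mathbf{x}_q^{k+1}\}_{q=1}^{Q}, \mathbf{x}_0^{k+1}, \mathbf{y}^k)$ and $(\{\mathbf{x}_q^{k+1}\}_{q=1}^{Q}, \mathbf{x}_0^{k}, \mathbf{y}^k)$, and then to bound each contribution separately before summing.

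For the dual contribution I would use that $L_\rho$ depends on $\mathbf{y}$ only through the linear term ${\rm Re}\langle \mathbf{x}_0-\sum_{q=1}^{Q} 2^{q-1}\mathbf{x}_q, \mathbf{y}\rangle$, so that this difference equals ${\rm Re}\langle \mathbf{x}_0^{k+1}-\sum_{q=1}^{Q} 2^{q-1}\mathbf{x}_q^{k+1}, \mathbf{y}^{k+1}-\mathbf{y}^k\rangle$. Rewriting the dual update \eqref{eq:y_update} as $\mathbf{x}_0^{k+1}-\sum_{q=1}^{Q} 2^{q-1}\mathbf{x}_q^{k+1}=\frac{1}{\rho}(\mathbf{y}^{k+1}-\mathbf{y}^k)$ turns the inner product into $\frac{1}{\rho}\|\mathbf{y}^{k+1}-\mathbf{y}^k\|_2^2$, which Lemma \ref{lemma:z1} bounds by $\frac{\lambda_{\rm max}^2(\mathbf{H}^H\mathbf{H})}{\rho}\|\mathbf{x}_0^{k+1}-\mathbf{x}_0^k\|_2^2$. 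This is precisely the single positive term appearing in \eqref{eq:L_difference}.

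For the two primal contributions I would exploit strong convexity of each block subproblem together with the fact that each new iterate is its exact minimizer. The map $\mathbf{x}_0\mapsto L_\rho(\{\mathbf{x}_q^{k+1}\}_{q=1}^{Q},\mathbf{x}_0,\mathbf{y}^k)$ has Hessian $\mathbf{H}^H\mathbf{H}+\rho\mathbf{I}\succeq\rho\mathbf{I}$, hence is $\rho$-strongly convex; since $\mathbf{x}_0^{k+1}$ minimizes it, the descent inequality for a strongly convex function gives a contribution $\le -\frac{\gamma(\rho)}{2}\|\mathbf{x}_0^{k+1}-\mathbf{x}_0^k\|_2^2$ with $\gamma(\rho)=\rho$. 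For the $\mathbf{x}_q$ block I would telescope over $q=1,\dots,Q$ along the Gauss--Seidel sweep, so that in the $q$-th difference only $\mathbf{x}_q$ changes (from $\mathbf{x}_q^k$ to $\mathbf{x}_q^{k+1}$) while the earlier blocks sit at iterate $k+1$ and the later ones at iterate $k$. Restricted to that block the objective reduces to $-\frac{\alpha_q}{2}\|\mathbf{x}_q\|_2^2+\frac{\rho}{2}\|\cdots-2^{q-1}\mathbf{x}_q\|_2^2$ plus constants, with Hessian $(4^{q-1}\rho-\alpha_q)\mathbf{I}$, positive definite under $4^{q-1}\rho>\alpha_q$; applying the strong-convexity descent inequality in its constrained form over the convex box $\tilde{\mathcal{X}}_q^U$ (using first-order optimality of the projected minimizer \eqref{eq: solution_Xq} and feasibility of $\mathbf{x}_q^k$) yields a contribution $\le -\frac{\gamma_q(\rho)}{2}\|\mathbf{x}_q^{k+1}-\mathbf{x}_q^k\|_2^2$ with $\gamma_q(\rho)=4^{q-1}\rho-\alpha_q$. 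Summing the three bounds produces \eqref{eq:L_difference}, and the hypothesis $\rho>\sqrt{2}\lambda_{\rm max}(\mathbf{H}^H\mathbf{H})$, equivalently $\rho^2>2\lambda_{\rm max}^2(\mathbf{H}^H\mathbf{H})$, is exactly what makes the coefficient $\frac{\gamma(\rho)}{2}-\frac{\lambda_{\rm max}^2(\mathbf{H}^H\mathbf{H})}{\rho}$ strictly positive.

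The main obstacle I anticipate is the $\mathbf{x}_q$ telescoping: one must track, block by block along the sweep, which $\mathbf{x}_i$ are already at iterate $k+1$ and which remain at iterate $k$, and verify that the residual quadratic seen by block $q$ has curvature exactly $4^{q-1}\rho-\alpha_q$. The remaining delicate point is to invoke the descent inequality in its \emph{constrained} form---legitimate because $\tilde{\mathcal{X}}_q^U$ is convex and $\mathbf{x}_q^{k+1}$ is the exact constrained minimizer---rather than the unconstrained identity used for the $\mathbf{x}_0$ block.
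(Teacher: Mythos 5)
Your proposal is correct and follows essentially the same route as the paper: the same three-way splitting via the intermediate iterates, the same use of the dual update and Lemma \ref{lemma:z1} to turn the $\mathbf{y}$-difference into the positive term $\frac{\lambda_{\rm max}^2(\mathbf{H}^H\mathbf{H})}{\rho}\|\mathbf{x}_0^{k+1}-\mathbf{x}_0^k\|_2^2$, and the same block-wise strong-convexity descent inequalities (constrained variational form for each $\mathbf{x}_q$ along the Gauss--Seidel sweep, unconstrained stationarity for $\mathbf{x}_0$). The only cosmetic difference is that you fix the moduli at their extreme values $\gamma_q(\rho)=4^{q-1}\rho-\alpha_q$ and $\gamma(\rho)=\rho$, whereas the paper keeps them as any admissible values in $(0,4^{q-1}\rho-\alpha_q]$ and $(0,\lambda_{\rm min}(\mathbf{H}^H\mathbf{H})+\rho]$ respectively.
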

where $\gamma_q(\rho)$ and $\gamma(\rho)$ are strongly convex moduli of $ L_{\rho}(\{\mathbf{x}_q\}_{q=1}^{Q}, \mathbf{x}_0, \mathbf{y})$ w.r.t.\ each $\mathbf{x}_q$ and $\mathbf{x}_0$, respectively.
\begin{proof}
We first split the successive difference of the augmented Lagrangian by
\begin{align}\label{eq:successive_L1}
&L_{\rho}(\{\mathbf{x}_q^{k+1}\}, \mathbf{x}_0^{k+1}, \mathbf{y}^{k+1})-L_{\rho}(\{\mathbf{x}^{k}_q\}, \mathbf{x}_0^{k}, \mathbf{y}^{k})\nonumber\\
&=\left(L_{\rho}(\{\mathbf{x}_q^{k+1}\}, \mathbf{x}_0^{k+1}, \mathbf{y}^{k+1})-L_{\rho}(\{\mathbf{x}_q^{k+1}\}, \mathbf{x}_0^{k+1}, \mathbf{y}^{k})\right)\nonumber\\
&\quad+\left(L_{\rho}(\{\mathbf{x}^{k+1}_q\}, \mathbf{x}_0^{k+1}, \mathbf{y}^{k})-L_{\rho}(\{\mathbf{x}^{k}_q\}, \mathbf{x}_0^{k}, \mathbf{y}^{k})\right),\nonumber\\
&\hspace{6cm}q=1,2,\cdots,Q
\end{align}
The first term in \eqref{eq:successive_L1} can be bounded by
\begin{align}
\begin{split}\label{eq:extra}
&L_{\rho}(\{\mathbf{x}_q^{k+1}\}_{q=1}^{Q}, \mathbf{x}_0^{k+1}, \mathbf{y}^{k+1})-L_{\rho}(\{\mathbf{x}_q^{k+1}\}_{q=1}^{Q}, \mathbf{x}_0^{k+1}, \mathbf{y}^{k})\\
&={\rm Re}\big\langle \mathbf{x}_0^{k+1}-\sum_{q=1}^{Q}2^{q-1} \mathbf{x}_q^{k+1}, \; \mathbf{y}^{k+1}\big\rangle \\
&\quad-{\rm Re}\big\langle \mathbf{x}_0^{k+1}-\sum_{q=1}^{Q}2^{q-1} \mathbf{x}_q^{k+1}, \; \mathbf{y}^{k}\big\rangle \\
&={\rm Re}\big\langle \mathbf{x}_0^{k+1}-\sum_{q=1}^{Q}2^{q-1} \mathbf{x}_q^{k+1},\; \mathbf{y}^{k+1}-\mathbf{y}^{k}\big\rangle \\
&\stackrel{\rm (a)}
=\frac{1}{\rho}\|\mathbf{y}^{k+1}-\mathbf{y}^{k}\|_2^2\\
&\stackrel{\rm (b)}
\le\frac{\lambda_{\rm max}^2(\mathbf{H}^{H} \mathbf{H})}{\rho}\|\mathbf{x}_0^{k+1}-\mathbf{x}_0^{k}\|_2^2,
\end{split}
\end{align}
in \eqref{eq:extra}, the equality ``$\stackrel{\rm (a)}{=}$'' and ``$\stackrel{\rm (b)}{\le}$'' hold since we have use \eqref{eq:y_update} and \eqref{eq:y_difference1} respectively.
The second term in \eqref{eq:successive_L1} can be bounded by
\begin{align}
\begin{split}\label{eq:extra1}
&L_{\rho}(\{\mathbf{x}_q^{k+1}\}_{q=1}^{Q}, \mathbf{x}_0^{k+1}, \mathbf{y}^{k})-L_{\rho}(\{\mathbf{x}_q^{k}\}_{q=1}^{Q}, \mathbf{x}_0^{k}, \mathbf{y}^{k})\\
&=L_{\rho}(\{\mathbf{x}_q^{k+1}\}_{q=1}^{Q}, \mathbf{x}_0^{k}, \mathbf{y}^{k})-L_{\rho}(\{\mathbf{x}_q^{k}\}_{q=1}^{Q}, \mathbf{x}_0^{k}, \mathbf{y}^{k})\\
&+L_{\rho}(\{\mathbf{x}_q^{k+1}\}_{q=1}^{Q}, \mathbf{x}_0^{k+1}, \mathbf{y}^{k})-L_{\rho}(\{\mathbf{x}_q^{k+1}\}_{q=1}^{Q}, \mathbf{x}_0^{k}, \mathbf{y}^{k})\\
&\stackrel{\rm (a)}\le \sum_{q=1}^{Q}\Bigl({\rm Re}\left\langle\nabla_{\mathbf{x}_q}  L_{\rho}(\mathbf{x}_q^{k+1},\!\mathbf{x}_1^{k+1}\!\!\!\!,\!\cdots\!,\mathbf{x}_{q-1}^{k+1}\!, \mathbf{x}_{q+1}^k\!,\!\cdots\!,\!\mathbf{x}_Q^k,\!\mathbf{x}_0^{k},\! \mathbf{y}^{k}), \right.\\
&\hspace{1cm}\left.\mathbf{x}_q^{k+1}-\mathbf{x}_q^{k}\right\rangle-\frac{\gamma_q(\rho)}{2}\|\mathbf{x}_q^{k+1}-\mathbf{x}_{q}^{k}\|_2^2\Big)\\
&\hspace{0.5cm}+{\rm Re}\left\langle\nabla_{\mathbf{x}_0}  L_{\rho}(\{\mathbf{x}_q^{k+1}\}_{q=1}^{Q}, \mathbf{x}_0^{k+1}, \mathbf{y}^{k}), \mathbf{x}_0^{k+1}-\mathbf{x}_0^{k}\right\rangle\\
&\hspace{4.5cm}-\frac{\gamma(\rho)}{2}\|\mathbf{x}_0^{k+1}-\mathbf{x}^{k}_0\|_2^2\\
&\stackrel{\rm (b)}\le -\sum_{q=1}^{Q}\frac{\gamma_q(\rho)}{2} \|\mathbf{x}^{k+1}_q-\mathbf{x}_{q}^{k}\|_2^2-\frac{\gamma(\rho)}{2} \|\mathbf{x}_0^{k+1}-\mathbf{x}_0^k\|_2^2.
\end{split}
\end{align}
Since $\nabla_{\mathbf{x}_q}^2 L_{\rho}(\{\mathbf{x}_q\}_{q=1}^{Q}, \mathbf{x}_0, \mathbf{y})  = (2^{q-1})^2\rho-\alpha_q > {0}$ always holds, which leads that $ L_{\rho}(\{\mathbf{x}_q\}, \mathbf{x}_0, \mathbf{y})$ is strongly convex w.r.t.\ each $\mathbf{x}_q$ with modulus $\gamma_q(\rho)$, we can obtain the properties of strong convexity \cite{boyd2004convex} such that
\begin{align}\label{eq:strconvXq}
&L_{\rho}(\mathbf{x}_q^k,\!\mathbf{x}_1^{k+1}\!\!\!\!,\!\cdots\!,\mathbf{x}_{q-1}^{k+1}\!, \mathbf{x}_{q+1}^k\!,\!\cdots\!,\!\mathbf{x}_Q^k,\!\mathbf{x}_0^{k},\! \mathbf{y}^{k})\nonumber\\
&\ge L_{\rho}(\mathbf{x}_q^{k+1},\!\mathbf{x}_1^{k+1}\!\!\!\!,\!\cdots\!,\mathbf{x}_{q-1}^{k+1}\!, \mathbf{x}_{q+1}^k\!,\!\cdots\!,\!\mathbf{x}_Q^k,\!\mathbf{x}_0^{k},\! \mathbf{y}^{k})\nonumber\\
&+{\rm Re}\left\langle\nabla_{\mathbf{x}_q}  L_{\rho}(\mathbf{x}_q^{k+1},\!\mathbf{x}_1^{k+1}\!\!\!\!,\!\cdots\!,\mathbf{x}_{q-1}^{k+1}\!, \mathbf{x}_{q+1}^k\!,\!\cdots\!,\!\mathbf{x}_Q^k,\!\mathbf{x}_0^{k},\! \mathbf{y}^{k}), \right.\nonumber\\
&\hspace{0.4cm}\left.\mathbf{x}_q^{k}-\mathbf{x}_q^{k+1}\right\rangle+\frac{\gamma_q(\rho)}{2}\|\mathbf{x}_q^{k}-\mathbf{x}_{q}^{k+1}\|_2^2, \nonumber\\
&\hspace{1.2cm}\  \forall~{\gamma_q(\rho)}\in (0,(2^{q-1})^2\rho-\alpha_q].
\end{align}
Since $\mathbf{H}^{H} \mathbf{H} $ is a positive definite matrix and $\rho>0$, $ \nabla_{\mathbf{x}_0}^2 L_{\rho}(\{\mathbf{x}_q\}_{q=1}^{Q}, \mathbf{x}_0, \mathbf{y})\succeq(\lambda_{\rm min}(\mathbf{H}^{H} \mathbf{H})+\rho)\mathbf{I}\succ0$ always holds, which leads that $ L_{\rho}(\{\mathbf{x}_q\}_{q=1}^{Q}, \mathbf{x}_0, \mathbf{y})$ is strongly convex w.r.t.\ $\mathbf{x}_0$ with modulus $\gamma(\rho)$,
we can obtain the properties of strong convexity \cite{boyd2004convex} such that
\begin{align}\label{eq:strconvX0}
&L_{\rho}(\{\mathbf{x}_q^{k+1}\}_{q=1}^{Q}, \mathbf{x}_0^{k}, \mathbf{y}^{k})\ge L_{\rho}(\{\mathbf{x}_q^{k+1}\}_{q=1}^{Q}, \mathbf{x}_0^{k+1}, \mathbf{y}^{k})\nonumber\\
&\hspace{1.2cm}+{\rm Re}\left\langle\nabla_{\mathbf{x}_0}  L_{\rho}(\{\mathbf{x}_q^{k+1}\}_{q=1}^{Q}, \mathbf{x}_0^{k+1}, \mathbf{y}^{k}), \mathbf{x}_0^{k}-\mathbf{x}_0^{k+1}\right\rangle\nonumber\\
&\hspace{1.2cm}+\frac{\gamma(\rho)}{2}\|\mathbf{x}_0^{k}-\mathbf{x}^{k+1}_0\|_2^2,\nonumber \\
&\hspace{1.2cm} \forall~{\gamma(\rho)}\in (0,{\lambda_{\rm min}(\mathbf{H}^{H} \mathbf{H})+\rho}].
\end{align}

In \eqref{eq:extra1}, the inequality ``$\stackrel{\rm (a)}{\le}$'' holds since we have used the fact \eqref{eq:strconvXq} and \eqref{eq:strconvX0};
the inequality ``$\stackrel{\rm (b)}{\le}$'' holds since we have used the optimality condition \cite{bertsekas2009convex} of subproblem \eqref{eq:x_q_update} and \eqref{eq:x0_update} , i.e., $\left\langle\nabla_{\mathbf{x}_q} L_{\rho}(\mathbf{x}_q^{k+1},\!\mathbf{x}_1^{k+1}\!\!\!\!,\!\cdots\!,\mathbf{x}_{q-1}^{k+1}\!, \mathbf{x}_{q+1}^k\!,\!\cdots\!,\!\mathbf{x}_Q^k,\!\mathbf{x}_0^{k},\! \mathbf{y}^{k}), \mathbf{x}_q^{k}\!-\!\mathbf{x}_q^{k+1}\right\rangle\\ \ge 0$ and $\nabla_{\mathbf{x}_0}  L_{\rho}(\{\mathbf{x}_q^{k+1}\}_{q=1}^{Q}, \mathbf{x}_0^{k+1}, \mathbf{y}^{k})= 0 $.

Combining the above two inequalities \eqref{eq:extra} and \eqref{eq:extra1}, we obtain
\begin{align*}
&L_{\rho}(\{\mathbf{x}_q^{k+1}\}_{q=1}^{Q}, \mathbf{x}_0^{k+1}, \mathbf{y}^{k+1})-L_{\rho}(\{\mathbf{x}_q^{k}\}_{q=1}^{Q}, \mathbf{x}_0^{k}, \mathbf{y}^{k})\\
&\hspace{2cm}\le \sum_{q=1}^{Q}-\frac{\gamma_q(\rho)}{2} \|\mathbf{x}_q^{k+1}-\mathbf{x}_{q}^{k}\|_2^2\\
&\hspace{2cm}\quad\  -\Big(\frac{\gamma(\rho)}{2}-\frac{\lambda_{\rm max}^2(\mathbf{H}^{H} \mathbf{H})}{\rho}\Big) \|\mathbf{x}_0^{k+1}-\mathbf{x}_0^{k}\|_2^2.
\end{align*}
The desired result is obtained.

Since $\rho{\gamma(\rho)}$ is monotonically increasing w.r.t $\rho$ and ${\lambda_{\rm max}^2(\mathbf{H}^{H} \mathbf{H})}$ is a constant,
suppose $\rho > \sqrt{2} \lambda_{\rm max}(\mathbf{H}^H\mathbf{H})$ is satisfied, which leads to the following inequality holds.
\begin{align}
&\Big(\frac{\gamma(\rho)}{2}-\frac{\lambda_{\rm max}^2(\mathbf{H}^{H} \mathbf{H})}{\rho}\Big)>0,\nonumber \\
&\hspace{2.5cm}\forall~{\gamma(\rho)}\in (0,{\lambda_{\rm min}(\mathbf{H}^{H} \mathbf{H})+\rho}].\nonumber
\end{align}

The above result implies that the value of the augmented Lagrangian function $L_{\rho}(\{\mathbf{x}_q^{k}\}_{q=1}^{Q}, \mathbf{x}_0^{k}, \mathbf{y}^{k} )$ will always decrease if $4^{q-1}\rho>\alpha_q$ and $\rho > \sqrt{2} \lambda_{\rm max}(\mathbf{H}^H\mathbf{H})$ are satisfied.
$\hfill\blacksquare$
\end{proof}

\begin{lemma}\label{lemma:L_bounded1}
Assume parameters $\{\alpha_q\}_{q=1}^{Q}$ and $\rho$ satisfy $4^{q-1}\rho>\alpha_q$ and $\rho > \sqrt{2} \lambda_{\rm max}(\mathbf{H}^H\mathbf{H})$. Let $\{\{\mathbf{x}_q^{k}\}_{q=1}^{Q},\mathbf{x}_0^k,\mathbf{y}^k\}$ be generated by Algorithm 1, then the following limit exists and is bounded from below
\begin{align}
\lim_{k\to\infty} L_{\rho}(\{\mathbf{x}_q^{k}\}_{q=1}^{Q},\mathbf{x}_0^k,\mathbf{y}^k) >-\infty.
\end{align}
\end{lemma}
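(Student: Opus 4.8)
The plan is to show that, evaluated along the iterates, the augmented Lagrangian admits a rewriting as the sum of a manifestly nonnegative data-fidelity term, a term that is bounded above by a fixed constant, and a nonnegative quadratic in the feasibility residual. Boundedness below follows immediately, and combining it with the monotone decrease established in Lemma~\ref{lemma:L_difference}, a nonincreasing sequence bounded from below converges, so the limit exists and exceeds $-\infty$.

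First I would eliminate the dual variable, which is the only quantity without an a priori bound. From \eqref{eq:z_nabla_expression} in the proof of Lemma~\ref{lemma:z1} we have $\mathbf{y}^{k+1}=-\nabla_{\mathbf{x}_0}\ell(\mathbf{x}_0^{k+1})$. Writing $\mathbf{s}^{k+1}:=\mathbf{x}_0^{k+1}-\sum_{q=1}^{Q}2^{q-1}\mathbf{x}_q^{k+1}$ for the feasibility residual, the augmented Lagrangian at iterate $k+1$ reads
\begin{align*}
&L_{\rho}(\{\mathbf{x}_q^{k+1}\}_{q=1}^{Q},\mathbf{x}_0^{k+1},\mathbf{y}^{k+1})=\ell(\mathbf{x}_0^{k+1})-\sum_{q=1}^{Q}\frac{\alpha_q}{2}\|\mathbf{x}_q^{k+1}\|_2^2\\
&\hspace{1.5cm}+{\rm Re}\big\langle \mathbf{s}^{k+1},\mathbf{y}^{k+1}\big\rangle+\frac{\rho}{2}\|\mathbf{s}^{k+1}\|_2^2.
\end{align*}

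The key step is to substitute $\mathbf{y}^{k+1}=-\nabla_{\mathbf{x}_0}\ell(\mathbf{x}_0^{k+1})$ into the inner product and exploit that $\ell$ is an exact quadratic with Hessian $\mathbf{H}^{H}\mathbf{H}$. The second-order identity
$\ell\big(\sum_{q}2^{q-1}\mathbf{x}_q^{k+1}\big)=\ell(\mathbf{x}_0^{k+1})-{\rm Re}\langle\nabla_{\mathbf{x}_0}\ell(\mathbf{x}_0^{k+1}),\mathbf{s}^{k+1}\rangle+\tfrac12\|\mathbf{H}\mathbf{s}^{k+1}\|_2^2$
lets me trade the cross term for $\ell\big(\sum_q 2^{q-1}\mathbf{x}_q^{k+1}\big)-\ell(\mathbf{x}_0^{k+1})-\tfrac12\|\mathbf{H}\mathbf{s}^{k+1}\|_2^2$; the $\ell(\mathbf{x}_0^{k+1})$ terms then cancel and the Lagrangian collapses to
\begin{align*}
&L_{\rho}(\{\mathbf{x}_q^{k+1}\}_{q=1}^{Q},\mathbf{x}_0^{k+1},\mathbf{y}^{k+1})=\ell\Big(\sum_{q=1}^{Q}2^{q-1}\mathbf{x}_q^{k+1}\Big)\\
&\hspace{1cm}-\sum_{q=1}^{Q}\frac{\alpha_q}{2}\|\mathbf{x}_q^{k+1}\|_2^2+\frac12\big(\rho\|\mathbf{s}^{k+1}\|_2^2-\|\mathbf{H}\mathbf{s}^{k+1}\|_2^2\big).
\end{align*}
It then remains to bound each piece. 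The fidelity term $\ell(\cdot)=\tfrac12\|\mathbf{r}-\mathbf{H}(\cdot)\|_2^2\ge0$; the residual term satisfies $\rho\|\mathbf{s}^{k+1}\|_2^2-\|\mathbf{H}\mathbf{s}^{k+1}\|_2^2\ge(\rho-\lambda_{\rm max}(\mathbf{H}^{H}\mathbf{H}))\|\mathbf{s}^{k+1}\|_2^2\ge0$, since $\rho>\sqrt2\,\lambda_{\rm max}(\mathbf{H}^{H}\mathbf{H})>\lambda_{\rm max}(\mathbf{H}^{H}\mathbf{H})$ under the standing assumption; and the penalty term is bounded above because each update \eqref{eq: solution_Xq} projects onto $[-1,1]$, so $\mathbf{x}_q^{k+1}\in\tilde{\mathcal{X}}_q^{U}$ forces $\|\mathbf{x}_q^{k+1}\|_2^2\le2U$ and hence $\sum_{q}\frac{\alpha_q}{2}\|\mathbf{x}_q^{k+1}\|_2^2\le\sum_{q}\alpha_q U$. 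Therefore $L_{\rho}(\{\mathbf{x}_q^{k+1}\}_{q=1}^{Q},\mathbf{x}_0^{k+1},\mathbf{y}^{k+1})\ge-\sum_{q=1}^{Q}\alpha_q U>-\infty$ for every $k$, and Lemma~\ref{lemma:L_difference} gives monotone decrease, so the limit exists.

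I expect the main obstacle to be the substitution/rewriting step, since the dual iterate $\mathbf{y}^{k}$ carries no a priori bound and the whole argument hinges on recognizing that inserting $\mathbf{y}^{k+1}=-\nabla_{\mathbf{x}_0}\ell(\mathbf{x}_0^{k+1})$ together with the exact quadratic expansion of $\ell$ converts the dangerous $\langle\mathbf{s}^{k+1},\mathbf{y}^{k+1}\rangle$ term into the bounded combination above. Getting the signs and the $\tfrac12\|\mathbf{H}\mathbf{s}^{k+1}\|_2^2$ curvature correction right is where care is needed; once that identity is in hand, the remaining bounds are routine.
$\hfill\blacksquare$
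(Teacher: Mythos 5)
Your proposal is correct and follows essentially the same route as the paper's proof: both eliminate the dual variable via $\mathbf{y}^{k+1}=-\nabla_{\mathbf{x}_0}\ell(\mathbf{x}_0^{k+1})$ (equation \eqref{eq:z_nabla_expression}), absorb the cross term using the curvature of $\ell$, and conclude from boundedness of the penalized objective over the box $\tilde{\mathcal{X}}_q^{U}$ together with the monotone decrease from Lemma~\ref{lemma:L_difference}. The only (cosmetic) difference is that you invoke the exact second-order expansion of the quadratic $\ell$, yielding an identity with the term $\tfrac12\|\mathbf{H}\mathbf{s}^{k+1}\|_2^2$, whereas the paper uses the Descent Lemma with $\tfrac{\lambda_{\rm max}(\mathbf{H}^H\mathbf{H})}{2}\|\mathbf{s}^{k+1}\|_2^2$; after bounding $\|\mathbf{H}\mathbf{s}^{k+1}\|_2^2\le\lambda_{\rm max}(\mathbf{H}^H\mathbf{H})\|\mathbf{s}^{k+1}\|_2^2$ the two arguments coincide.
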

\begin{proof}
We have the following series of inequalities
\begin{align}\label{eq:Lbound}
&L_{\rho}(\{\mathbf{x}_q^{k+1}\}_{q=1}^{Q}, \mathbf{x}_0^{k+1}, \mathbf{y}^{k+1})\nonumber\\
&=\sum_{q=1}^{Q}\big(-\frac{\alpha_q}{2}\big\Vert\mathbf{x}_q^{k+1} \big \Vert_2 ^{2}\big )+\ell\left(\mathbf{x}_0^{k+1}\right)\nonumber\\
&\ \ \ +{\rm Re}\Big\langle \mathbf{x}_0^{k+1}-\sum_{q=1}^{Q}2^{q-1} \mathbf{x}_q^{k+1}, \mathbf{y}^{k+1}\Big\rangle\nonumber\\
&\quad+\frac{\rho}{2}\Big\|\mathbf{x}_0^{k+1}-\sum_{q=1}^{Q}2^{q-1}\mathbf{x}_q^{k+1}\Big\|_2^2 \nonumber\\
&\stackrel{\rm (a)}=\sum_{q=1}^{Q}\big(-\frac{\alpha_q}{2}\Vert\mathbf{x}_q^{k+1} \Vert_2 ^{2}\big)+\ell\left(\mathbf{x}_0^{k+1}\right)\nonumber\\
&\quad +{\rm Re}\Big\langle \sum_{q=1}^{Q}2^{q-1} \mathbf{x}_q^{k+1}-\mathbf{x}_0^{k+1}, \nabla_{\mathbf{x}_0} \ell\left(\mathbf{x}_0^{k+1}\right)\Big\rangle\nonumber\\
&\quad +\frac{\rho}{2}\Big\|\mathbf{x}_0^{k+1}-\sum_{q=1}^{Q}2^{q-1}\mathbf{x}_q^{k+1}\Big\|_2^2 \nonumber\\
&\stackrel{\rm (b)}\ge \sum_{q=1}^{Q}\big(-\frac{\alpha_q}{2}\Vert\mathbf{x}_q^{k+1} \Vert_2 ^{2}\big)+\ell\big(\sum_{q=1}^{Q}2^{q-1} \mathbf{x}_q^{k+1}\big)\nonumber\\
&\quad +\frac{\rho-\lambda_{\rm max}(\mathbf{H}^H\mathbf{H})}{2}\Big\|\mathbf{x}_0^{k+1}-\sum_{q=1}^{Q}2^{q-1}\mathbf{x}_q^{k+1}\Big\|_2^2.
\end{align}
In \eqref{eq:Lbound},  the equality ``$\stackrel{\rm (a)}{=}$'' holds since we have used \eqref{eq:z_nabla_expression}.
Since we show that gradient $\|\nabla_{\mathbf{x}_0}\ell\left(\mathbf{x}_0\right)\|_2$ is Lipschitz continuous in Lemma \ref{lemma:z1}, according to the Decent Lemma \cite{bertsekas1999nonlinear} and $ \|\nabla_{\mathbf{x}_0}^2 \ell\left(\mathbf{x}_0\right)\|_2 \le \lambda_{\rm max}(\mathbf{H}^H\mathbf{H})$ \cite{boyd2004convex}, we can obtain
\begin{align*}
&\ell\big(\sum_{q=1}^{Q}2^{q-1} \mathbf{x}_q^{k+1}\big)\le \ell\left(\mathbf{x}_0^{k+1}\right)\\
&+{\rm Re}\Big\langle \nabla_{\mathbf{x}_0} \ell\left(\mathbf{x}_0^{k+1}\right),\sum_{q=1}^{Q}2^{q-1} \mathbf{x}_q^{k+1}-\mathbf{x}_0^{k+1} \Big\rangle \\
&+\frac{\lambda_{\rm max}(\mathbf{H}^H\mathbf{H})}{2}\Big\|\sum_{q=1}^{Q}2^{q-1}\mathbf{x}_q^{k+1}-\mathbf{x}_0^{k+1}\Big\|_2^2.
\end{align*}
This implies the inequality ``$\stackrel{\rm (b)}{\ge}$'' in \eqref{eq:Lbound} holds true.
Since $\sum_{q=1}^{Q}(-\frac{\alpha_q}{2}\Vert\mathbf{x}_q^{k+1} \Vert_2 ^{2})+\ell\big(\sum_{q=1}^{Q}2^{q-1} \mathbf{x}_q^{k+1}\big)$ is bounded over $\mathbf{x}_{qR}, \mathbf{x}_{qI} \in [-1\ 1]^U$, as well as the fact that $\rho-\lambda_{\rm max}(\mathbf{H}^H\mathbf{H})>0$ comes from $\rho> \sqrt{2} \lambda_{\rm max}(\mathbf{H}^H\mathbf{H})$. Using these two cases leads to the desired result that $L_{\rho}(\{\mathbf{x}_q^{k}\}_{q=1}^{Q},\mathbf{x}_0^k,\mathbf{y}^k)$ is lower bounded and Lemma \ref{lemma:L_bounded1} has proved.

Combining Lemma \ref{lemma:L_difference} and Lemma \ref{lemma:L_bounded1}, we conclude that $L_{\rho}(\{\mathbf{x}_q^{k}\}_{q=1}^{Q},\mathbf{x}_0^k,\mathbf{y}^k)$ is monotonically decreasing and is convergent.
$\hfill\blacksquare$
\end{proof}

\section{Proof of Theorem 1}\label{PS-ADMM Proof}

First, we prove \eqref{convergence variables} in Theorem \ref{thm:convergence}.
Since Lemma \ref{lemma:L_difference} holds, we sum both sides of the inequality \eqref{eq:L_difference} when $k=1,2,\cdots,+\infty$ and obtain
    \begin{align*}
    &L_{\rho}\left (\{\mathbf{x}_q^{1}\}_{q=1}^{Q}, \mathbf{x}_0^{1}, \mathbf{y}^{1} \right)-L_{\rho}\left (\{\mathbf{x}_q^{k}\}_{q=1}^{Q}, \mathbf{x}_0^{k}, \mathbf{y}^{k}\right)\\
      &\geq \sum_{k=1}^{+\infty}\sum_{q=1}^{Q}\frac{\gamma_q(\rho)}{2}\|\mathbf{x}_q^{k+1}-\mathbf{x}_{q}^{k}\|_2^2 \\
&+\sum_{k=1}^{+\infty}\Big(\frac{\gamma(\rho)}{2}-\frac{\lambda_{\rm max}^2(\mathbf{H}^{H} \mathbf{H})}{\rho}\Big)\|\mathbf{x}_0^{k+1}-\mathbf{x}_0^k\|_2^2.\\
  \end{align*}
Using Lemma \ref{lemma:L_bounded1}, the above inequality indicates that summation of infinite positive terms is less than some constant. Therefore, we can obtain \eqref{eq:differenceX0} and \eqref{eq:differenceXq}.
\begin{equation}\label{eq:differenceX0}
 \lim\limits_{k\rightarrow+\infty} \|\mathbf{x}_0^{k+1}-\mathbf{x}_0^k\|_2 = 0.
 \end{equation}
\begin{equation}\label{eq:differenceXq}
 \lim\limits_{k\rightarrow+\infty}\|\mathbf{x}_q^{k+1}-\mathbf{x}_{q}^{k}\|_2 = 0,  \hspace{0.3cm} \forall~\mathbf{x}_q\in \tilde{\mathcal{X}}_q^{U}, q=1,2,\cdots,Q.
\end{equation}
Plugging \eqref{eq:differenceX0} into \eqref{eq:y_difference1}'s right side, we can get
\begin{equation}\label{limYzero}
\lim\limits_{k\rightarrow+\infty}\|\mathbf{y}^{k+1}-\mathbf{y}^k\|_2 = 0.
\end {equation}
Combining \eqref{limYzero} and \eqref{eq:y_update}, we further have
\begin{align}\label{eq:differenceX01}
\lim\limits_{k\rightarrow+\infty}\|\mathbf{x}_0^{k+1}-\sum_{q=1}^{Q}2^{q-1}\mathbf{x}_q^{k+1} \|_2 = 0.
\end{align}
Since $\mathbf{x}_{qR}, \mathbf{x}_{qI} \in [-1\ 1]^U$, we can obtain the following convergence results from \eqref{eq:differenceXq}.
\begin{equation}\label{convergence xq}
\lim\limits_{k\rightarrow+\infty}\mathbf{x}_q^{k}\!=\!\mathbf{x}_q^{*},~\forall~q=1,2,\cdots,Q.
\end{equation}
Plugging \eqref{convergence xq} into \eqref{eq:differenceX01}, we can conclude that $\mathbf{x}^{k}_0$ is bounded and have a limit point
\begin{equation}\label{convergence_x0}
\lim\limits_{k\rightarrow+\infty}\mathbf{x}_0^{k}=\mathbf{x}_0^{*}=\sum_{q=1}^{Q}2^{q-1}\mathbf{x}_q^{*}.
 \end{equation}

From \eqref{eq:z_nabla_expression},  we can derive
  \begin{equation}\label{limy_x0}
    \lim\limits_{k\rightarrow+\infty} \mathbf{y}^k = \lim\limits_{k\rightarrow+\infty}-\nabla_{\mathbf{x}_0} \ell\left(\mathbf{x}_0^{k}\right).
   \end{equation}
Since $\|\nabla_{\mathbf{x}_0} \ell\left(\mathbf{x}_0^{k+1}\right)-\nabla_{\mathbf{x}_0} \ell\left(\mathbf{x}_0^{k}\right)\|_2^2 \le \lambda_{\rm max}^2(\mathbf{H}^{H} \mathbf{H}) \|\mathbf{x}_0^{k+1}-\mathbf{x}_0^{k}\|_2^2$ and $\mathbf{x}^{k}_0$ is bounded, we can conclude that all the elements in $\nabla_{\mathbf{x}_0} \ell\left(\mathbf{x}_0\right)$ are bounded. From \eqref{limy_x0}, it indicates that $\mathbf{y}^k$ is also bounded. Combining this result with \eqref{limYzero}, we can get
\begin{equation}\label{convergence z}
\lim\limits_{k\rightarrow+\infty} \mathbf{y}^k =  \mathbf{y}^{*}.
\end{equation}
The above result indicates that there exists limit point $(\{\mathbf{x}_q^*\}_{q=1}^{Q}, \mathbf{x}_0^*, \mathbf{y}^*)$ for the sequences $\{\{\mathbf{x}_q^{k}\}_{q=1}^{Q}, \mathbf{x}_0^{k}, \mathbf{y}^{k}\}$ of iterations generated by the proposed Algorithm 1.

Second, we prove $\{\mathbf{x}_q^*\}_{q=1}^{Q}$ is a stationary point of original problem \eqref{eq:PS_ML}.

Since $\{\mathbf{x}_q^{k+1}\}_{q=1}^{Q} = \underset{\mathbf{x}_q \in \tilde{\mathcal{X}}_q^{U}}{\arg\min} \; L_{\rho}\left (\{\mathbf{x}_q\}_{q=1}^{Q}, \mathbf{x}^{k}_0, \mathbf{y}^{k} \right) $ in \eqref{eq:x_q_update}, and $ L_{\rho}\left (\{\mathbf{x}_q\}_{q=1}^{Q}, \mathbf{x}^{k}_0, \mathbf{y}^{k} \right)$ is strongly convex w.r.t.\ $\mathbf{x}_q$, we have the following optimality conditions.

\begin{equation}\label{optimality_xq}
\begin{split}
&{\rm Re}\Big\langle\nabla_{\mathbf{x}_q}\Big(\ell\left(\mathbf{x}_0^{k}\right)-\sum_{q=1}^{Q}\frac{\alpha_q}{2}\Vert\mathbf{x}_q^{k+1} \Vert_2 ^{2}\\
&+\big\langle \mathbf{x}_0^{k}-\sum_{q=1}^{Q}2^{q-1} \mathbf{x}_q^{k+1}, \mathbf{y}^{k}\big\rangle+\frac{\rho}{2}\big\|\mathbf{x}_0^{k}-\sum_{q=1}^{Q}2^{q-1}\mathbf{x}_q^{k+1} \big\|_2^2\Big),\\
&\hspace{0.1cm} \mathbf{x}_q-\mathbf{x}_q^{k+1} \Big\rangle \ge 0,  \quad \forall~\mathbf{x}_q\in \tilde{\mathcal{X}}_q^{U}, q=1,2,\cdots,Q.
\end{split}
\end{equation}
When $k\rightarrow+\infty$, plugging convergence results \eqref{convergence_x0} into \eqref{optimality_xq}, since $\mathbf{y}^k$ and $\rho$ are bounded, we can drop two terms in \eqref{optimality_xq} and obtain
\begin{align*}
& {\rm Re}\Big \langle\nabla_{\mathbf{x}_q} \Big(\ell\big(\sum_{q=1}^{Q}2^{q-1} \mathbf{x}_q^*\big)-\sum_{q=1}^{Q}\frac{\alpha_q}{2}\Vert\mathbf{x}_q^* \Vert_2 ^{2}\Big),\mathbf{x}_q-\mathbf{x}^*_q\Big\rangle\nonumber\\
& \quad\quad\ge 0,\hspace{1.8cm}\forall~\mathbf{x}_q\in \tilde{\mathcal{X}}_q^{U},\; q=1,2,\cdots,Q.
\end{align*}
which completes the proof.$\hfill\blacksquare$

\section{Proof of Theorem 2}\label{Iteration complexity Proof}
To be clear, here we rewrite \eqref{eq:L_difference} as
\[
  \begin{split}
    &L_{\rho}\left (\{\mathbf{x}_q^{k}\}_{q=1}^{Q}, \mathbf{x}_0^{k}, \mathbf{y}^{k} \right)-L_{\rho}\left (\{\mathbf{x}_q^{k+1}\}_{q=1}^{Q}, \mathbf{x}_0^{k+1}, \mathbf{y}^{k+1}\right)\\
      &\geq \sum_{q=1}^{Q}-\frac{\gamma_q(\rho)}{2}\|\mathbf{x}_q^{k+1}-\mathbf{x}_{q}^{k}\|_2^2 \nonumber\\
&-\Big(\frac{\gamma(\rho)}{2}-\frac{\lambda_{\rm max}^2(\mathbf{H}^{H} \mathbf{H})}{\rho}\Big)\|\mathbf{x}_0^{k+1}-\mathbf{x}_0^k\|_2^2.\\
  \end{split}
\]

According to Lemma \ref{lemma:L_difference}, there exists a constant $C\!=\!\min\!\left\{\!\{ \frac{\gamma_q(\rho)}{2}\}_{q=1}^{Q}, \Big(\frac{\gamma(\rho)}{2}\!\!-\!\!\frac{\lambda_{\rm max}^2(\mathbf{H}^{H} \mathbf{H})}{\rho}\Big)\! \right\}$
such that
\[
  \begin{split}
    &L_{\rho}\left (\{\mathbf{x}_q^{k}\}_{q=1}^{Q}, \mathbf{x}_0^{k}, \mathbf{y}^{k} \right)-L_{\rho}\left (\{\mathbf{x}_q^{k+1}\}_{q=1}^{Q}, \mathbf{x}_0^{k+1}, \mathbf{y}^{k+1}\right)\\
      &\geq C\Big(\sum_{q=1}^{Q}\|\mathbf{x}_q^{k+1}-\mathbf{x}_{q}^{k}\|_2^2 +\|\mathbf{x}_0^{k+1}-\mathbf{x}_0^k\|_2^2\Big).\\
  \end{split}
\]
 Summing both sides of the above inequality from $k=1,\cdots, K$, we have
 \begin{equation}\label{dL}
   \begin{split}
      &L_{\rho}\left (\{\mathbf{x}_q^{1}\}_{q=1}^{Q}, \mathbf{x}_0^{1}, \mathbf{y}^{1} \right)-L_{\rho}\left (\{\mathbf{x}_q^{K+1}\}_{q=1}^{Q}, \mathbf{x}_0^{K+1}, \mathbf{y}^{K+1}\right)\\
      &\geq \sum_{k=1}^{K}\Bigg(C\Big(\sum_{q=1}^{Q}\|\mathbf{x}_q^{k+1}-\mathbf{x}_{q}^{k}\|_2^2 +\|\mathbf{x}_0^{k+1}-\mathbf{x}_0^k\|_2^2\Big)\Bigg).\\
  \end{split}
 \end{equation}
 Since $t = \underset{k}{\rm min}\{k|\sum_{q=1}^{Q}\|\mathbf{x}_q^{k+1}-\mathbf{x}_{q}^{k}\|_2^2 + \|\mathbf{x}_0^{k+1}-\mathbf{x}_0^{k}\|_2^2\leq\epsilon\}$, we can change \eqref{dL} to
 \begin{equation}\label{Relax L}
   \begin{split}
    & L_{\rho}\left (\{\mathbf{x}_q^{1}\}_{q=1}^{Q}, \mathbf{x}_0^{1}, \mathbf{y}^{1} \right)-L_{\rho}\left (\{\mathbf{x}_q^{K+1}\}_{q=1}^{Q}, \mathbf{x}_0^{K+1}, \mathbf{y}^{K+1}\right)\\
    &  \geq tC\epsilon.
   \end{split}
 \end{equation}
 Since we have $L \!\left (\{\mathbf{x}_q^{K+1}\}_{q=1}^{Q}, \mathbf{x}_0^{K+1}, \mathbf{y}^{K+1}\right)\! \geq\! L\! \left (\{\mathbf{x}_q^*\}_{q=1}^{Q}, \mathbf{x}_0^*, \mathbf{y}^*\right)$, \eqref{Relax L} can be reduced to
 \[
   \begin{split}
     t \!&\leq\! \frac{1}{C\epsilon}\bigg(\!L_{\rho}\left (\{\mathbf{x}_q^{1}\}_{q=1}^{Q}, \mathbf{x}_0^{1}, \mathbf{y}^{1} \right) \!-\! L_{\rho}\left (\{\mathbf{x}_q^*\}_{q=1}^{Q}, \mathbf{x}_0^*, \mathbf{y}^*\right)\!\bigg),
   \end{split}
 \]
 where $L_{\rho}\left (\{\mathbf{x}_q^*\}_{q=1}^{Q}, \mathbf{x}_0^*, \mathbf{y}^*\right)=\ell\left(\mathbf{x}^*\right) - \sum_{q=1}^{Q}\frac{\alpha_q}{2} \Vert\mathbf{x}_q^*\Vert_2 ^{2}$, which concludes the proof of Theorem \ref{iteration complexity}.
 $\hfill\blacksquare$

\ifCLASSOPTIONcaptionsoff
  \newpage
\fi

\end{document}